%% file: main.tex
\documentclass[12pt]{article}
\usepackage[margin=1.5in]{geometry}
\usepackage{amsfonts}
\usepackage{amsmath}
\usepackage{amssymb}
\usepackage{amsthm}
\usepackage{float}
\usepackage{setspace}
\usepackage{hyperref}
\usepackage{enumitem}
\usepackage{natbib}
\usepackage{comment}
\usepackage{gensymb}
\usepackage{url}
\usepackage{tikz}
\usepackage[toc,page]{appendix}
\usetikzlibrary{arrows.meta, calc}

\theoremstyle{plain}
\newtheorem{theorem}{Theorem}

\theoremstyle{definition}

\title{Observational Indistinguishability and the Beginning of the Universe}
\author{Dan Linford}
\begin{document}

\maketitle

\begin{abstract}
    Can we infer whether all of physical reality began to exist? Several novel results are offered suggesting a negative verdict. First, a common strategy for defending a cosmic beginning involves showing that individual beginningless cosmological models are implausible. This strategy is shown to make an elementary error in confirmation theory. Second, two necessary (but not necessarily sufficient) conditions are offered for a cosmic beginning. Third, three extensions are offered to the Malament-Manchak theorems. The three extensions show that in almost all classical spacetimes, observers cannot collect sufficient data to determine whether the application conditions for the classic singularity theorems are satisfied or whether their spacetime satisfies the two necessary conditions for a cosmic beginning. Lastly, a reply is offered to the objection that the skeptical consequences of the three extensions can be overcome with induction. Importantly, all past singular dust FLRW spacetimes have observationally indistinguishable counterparts which, while sharing a number of important local properties, either do not include a singularity to the past of every point or else do not have the sort of time ordering intuitively required for a cosmic beginning. 
\end{abstract}

\section{Introduction}

Before the 1960s, most physicists assumed that the singularities appearing in standard cosmological models were mere artifacts of idealizing the universe as perfectly homogeneous and isotropic. According to an oft-told narrative, when Stephen Hawking, Roger Penrose, Robert Geroch, and others subsequently demonstrated that singularities are generic features of relativistic spacetimes, they  established that physical reality most likely began with the Big Bang.\footnote{For an overview of the history up through the Hawking-Penrose theorem, see \citep{Earman:1999}. For a critical discussion, see \citep{Senovilla:2021}. For important work in this development, see \citep{Hawking:1965, Hawking:1966, Geroch:1966, Hawking_Penrose:1970, HawkingEllis:1973}.}  Contrary to the popular narrative, the classic singularity theorems cannot establish a cosmic beginning: they show at most that spacetimes satisfying certain physical conditions must contain incomplete geodesics.\footnote{The popular narrative is mistaken in another important respect as well. While the singularity theorems appear to show that merely relaxing isotropy or homogeneity does not regularize spacetime, one can regularize spacetime by introducing a sufficiently strong inhomogeneity \citep{Senovilla:1990, Senovilla:1993, Senovilla:1996, Senovilla:2007, Senovilla:2021}.} Since a spacetime with incomplete geodesics can also contain complete timelike geodesics, such theorems fall short of establishing that all of physical reality began in the Big Bang. In order to establish a common beginning for all of spacetime, we need to know much more about spacetime's global structure than singularity theorems, themselves, are capable of establishing. Meanwhile, an important result -- first conjectured by David Malament (\citeyear{Malament:1977}) and later proved by J.B. Manchak (\citeyear{Manchak:2009}) -- suggests we will never know the global structure of the spacetime we inhabit. Taking these results together, one may wonder whether a beginning of all of physical reality can be established.

Here, several new results are presented concerning the observational indistinguishability of spacetimes satisfying two intuitive conditions for a cosmic beginning. Section \ref{conf-section} argues that a common strategy for defending or denying a cosmic beginning -- by rejecting individual cosmological models as implausible -- makes an elementary error in confirmation theory. Section \ref{defs} provides several definitions, including two necessary but insufficient conditions for a spacetime to have a beginning in an intuitive sense. Section \ref{mmt-section} discusses the Malament-Manchak theorems, sketches their proofs, and then, in subsection \ref{corollaries-section}, presents a series of corollaries to the Malament–Manchak Theorem. According to the corollaries, spacetimes satisfying various global conditions required by the classic singularity theorems have an observationally indistinguishable counterpart that fails to satisfy it; spacetimes with the sort of time ordering intuitively required for a beginning are observationally indistinguishable from spacetimes that do not; likewise, spacetimes with a singularity to the past of every point are observationally indistinguishable from one without a singularity to the past of every point. Finally, section \ref{inductive-results-section} responds to the objection that the theorems from previous sections may be overcome using induction. I demonstrate that all FLRW spacetimes, with dust stress-energy content, that have a singularity to the past of every point, possess observationally indistinguishable counterparts that, while sharing a number of physically interesting properties, do not include a singularity to the past of every point. Moreover, I introduce a set of mathematical techniques frequently employed in astrophysical applications, but which have not previously been discussed in connection with observationally indistinguishable spacetimes. Together, these results support a negative verdict for whether we can know that all of spacetime began.

\section{Confirmation Theory\label{conf-section}}

In a conference presentation from 2012, physicists Audrey Mithani and Alexander Vilenkin considered three popular cosmological scenarios (eternal inflation, a cyclic universe, and an emergent universe) and asked whether the universe had a beginning. They answered,

\begin{quote}
    At this point, it seems that the answer to this question is probably yes. Here we have addressed three scenarios which seemed to offer a way to avoid a beginning, and have found that none of them can actually be eternal in the past \citep[5]{MithaniVilenkin:2012}.
\end{quote}

\noindent Instead of completely ruling out the possibility that any of these three could involve a past eternal universe, they argue that any of the ways in which the three could be past eternal are so deeply implausible that they can be ruled out for all practical purposes. Hence, they conclude, we should think that the universe most likely had a beginning. 

Unfortunately, this is not a good reason for thinking that the universe likely had a beginning. To see this, let's start by supposing that there are $l$ known cosmological models without a beginning and $s$ unknown cosmological models without a beginning. There are, of course, a large number of known cosmological models (we are by no means limited to Mithani and Vilenkin's three examples) and, quite plausibly, an even greater number of unknown cosmological models. Hence, $l$ and $s$ are large numbers. Denote the disjunction of known beginningless cosmological models as:

\begin{equation}
    W^K_\vee \equiv W^K_1\vee W^K_2 \vee W^K_3 \vee ... \vee W^K_l
\end{equation}

\noindent Where $W^K_i$, $i\in \{1, 2, ..., l\}$, is a specific beginningless cosmological model. Likewise, denote the disjunction of unknown beginningless cosmological models as:

\begin{equation}
    W^U_\vee \equiv W^U_1 \vee W^U_2 \vee W^U_3 \vee ... \vee W^U_s
\end{equation}

\noindent Assuming that the models are mutually exclusive and given background knowledge $K$ and evidence $e$, the posterior epistemic probability of a beginningless universe is:

\begin{align}
    Pr(W_\vee|K\&e) &= Pr(W_\vee^K \vee W_\vee^U | K \& e) \\
                &= Pr(W_\vee^K|K\&e) + Pr(W_\vee^U | K \& e) \\
                &= \Sigma_i Pr(W_i^K|K\&e) + \Sigma_j Pr(W_j^U|K\&e)
\end{align}

\noindent To conclude that the universe is more likely to have had a beginning than to not have had a beginning, we need to infer that $Pr(W_\vee|K\&e) < 0.5$. However, even if $\Sigma_i Pr(W_i^K|K\&e) \approx 0$, it's still possible that $\Sigma_j Pr(W_j^U|K\&e) \approx 1$, so that $Pr(W_\vee|K\&e) \approx 1$. That is, while $\Sigma_i Pr(W_i^K|K\&e) \approx 0$ is a necessary condition for $Pr(W_\vee|K\&e) < 0.5$, it is not a sufficient condition. Furthermore, a sum of sufficiently many small numbers can be large. Hence, supposing that, for all $i\in \{1, 2, ..., l\}$, $Pr(W_i^K) \approx 0$, it's still possible that $\Sigma_i Pr(W_i^K) \approx 1$.

There are two ways one could successfully show that, probably, the universe had a beginning. First, one could show that some specific cosmological model with a beginning, $B$, is more than $50\%$ probable relative to our evidence and background knowledge, i.e., $Pr(B|K\&e) > 0.5$.\footnote{To see this, let's denote the disjunction of all possible models with a beginning as $B_\vee$. In that case, since $Pr(B|K\&e) \leq Pr(B_\vee|K\&e)$ and $Pr(B|K\&e) > 0.5$, we have that $Pr(B_\vee|e\&K) > 0.5$. And then since $Pr(B_\vee|K\&e) + Pr(W_\vee|K\&e) = 1$, we know that $Pr(W_\vee|K\&e) < 0.5$.} Second, one could show that the full disjunction of beginningless models, $W_\vee$, is improbable given our evidence and background knowledge. For example, one could show that all possible beginningless models -- whether we've previously been smart enough to think of them or not -- violate some sacrosanct principle. No one has succeeded in constructing an argument of that sort. Of course, this situation is exactly parallel with respect to arguments against the universe's beginning. Consequently, arguments against the universe's beginning also do not succeed if they only show that individual models are improbable.

One may object that the argument I've offered depends on idealizations that, while philosophically commonplace, are objectionable. For example, I've assumed that a probability distribution of some kind can be defined on the space of cosmological models, that this distribution is discrete, and that the distribution satisfies standard assumptions, such as countable additivity. Concerning discreteness, the argument can be updated in the obvious way for a continuous distribution. It is less obvious that a probability distribution can be appropriately defined. How best to define a measure even over mini-superspace remains controversial \citep{Gibbons_etal:1987, HawkingPage:1988, Coule:1995, McCoy:2017, Carroll:2023, Wenmackers:2023} and measures defined on mini-superspace are not necessarily countably additive \citep{HawkingPage:1988, McCoy:2017, Wenmackers:2023}. No one has a good proposal for defining a measure over the much larger space of all possible cosmological models. Nonetheless, this makes our situation much worse, by suggesting that our grasp on cosmological credences is much worse than the quasi-qualitative argument I've offered suggests. For this reason, it's difficult to motivate claims about what sorts of features are typical of cosmological models.

I have discussed one strategy for inferring a cosmic beginning. If we cannot infer a cosmic beginning by showing that various beginningless models are individually improbable, can an observer gather enough data to infer whether their universe began? Next, I present some results suggesting a negative verdict.

\section{Definitions\label{defs}}

A \emph{spacetime} is a pair $(M, g)$, where $M$ is a pseudo-Riemannian manifold with Lorentz signature and $g$ is a metric defined on $M$. A set $N \subset M$ is \emph{achronal} just in case there are no two points $p, q \in N$, such that $p$ and $q$ can be connected by a timelike curve. A closed, achronal set $N \subset M$ is a \emph{Cauchy surface} just in case $N$'s domain of dependence is all of $M$. $(M, g)$ is \emph{globally hyperbolic} just in case $(M, g)$ includes a Cauchy surface. $(M, g)$ is \emph{temporally orientable} if and only if a continuous vector field $F$ can be defined over $M$ where every element of $F$ is timelike with respect to $g$. Given a spacetime $(M, g)$ with a point $p$, the chronological past of $p$ is denoted $I^-(p)$. $(M, g)$ is \emph{bizarre} iff there exists a point $p \in M$ such that $I^-(p) = M$. Two spacetimes $(M, g)$ and $(M', g')$ are \emph{observationally indistinguishable} if for every point $p \in M$ there exists a point $p' \in M'$ such that $I^-(p)$ is isometric to $I^-(p')$. Note that this relationship is asymmetric.

Following \cite{Manchak:2009}, two spacetimes $(M, g)$ and $(M', g')$ are \emph{isometric} just in case there exists a diffeomorphism $\phi:M\rightarrow M'$ such that $\phi_*(g) = g'$. $(M, g)$ and $(M', g')$ are \emph{locally isometric} if, for each point $p \in M$, there is an open neighborhood $O$ of $p$ which is isometric to some open neighborhood $O' \subset M'$, and vice versa. A property $P$ is \emph{local} if, given two locally isometric spacetimes $(M, g)$ and $(M', g')$,  $(M, g)$ satisfies P if and only if $(M', g')$ does so. Intuitively, whether a local property holds depends only on the geometry in arbitrarily small neighborhoods of each point.

\subsection{Two Conditions for a Cosmic Beginning\label{two-conditions-section}}

I now introduce two necessary conditions that a spacetime with a cosmic beginning would intuitively satisfy. Whatever else the idea of a cosmic beginning might involve, a cosmic beginning would involve the universe being finite in an objectively earlier-than direction. Consequently, a beginning of the universe seems to involve two notions. First, a direction of time defined everywhere and consistently throughout spacetime. Second, no matter where one is located in spacetime, the universe's past history is finite.

\subsection{Stable Causality}

Without an objective distinction between the past and future directions of time, there would be no objective distinction between a beginning and an ending. For that reason, a cosmic beginning intuitively requires that (for example) an observer at any point in spacetime would agree on a past direction. $(M, g)$ is said to admit a \emph{cosmic time function} just in case there exists a smooth scalar field $t: M\rightarrow \mathbb{R}$ whose gradient $\nabla^\mu t$ is everywhere timelike and future-directed \citep[198-199]{HawkingEllis:1973}. We can think of $t$ as assigning a time to every point in $M$ that increases along every non-spacelike curve. This condition is equivalent to \emph{stable causality} \citep{Hawking:1969}, that is, closed timelike curves do not appear even if every light cone is opened by a small amount \citep[17]{Manchak:2020}. Hence, there can be a cosmic beginning only if spacetime is stably causal.

An objective past/future distinction requires more than spacetime merely admitting a cosmic time function. In addition to merely \emph{admitting} a cosmic time function, there must be some feature of the universe that determines the objective direction of time, e.g., a cosmic entropy gradient. Whatever might be involved in postulating an objective past/future distinction, such a distinction would not exist if spacetime did not admit a cosmic time function.

\subsection{The B-Incompleteness Condition}

There is a clear intuition that a beginning of spacetime involves a boundary to spacetime. In Georges Lema{\^i}tre's terms, a ``day without a yesterday''. In turn, there is a close relationship between the notion of a spacetime boundary and that of a spacetime singularity. The most sophisticated accounts of spacetime singularities involve b-incompleteness \citetext{\citealp[36]{Earman:1995}; \citealp{sep-spacetime-singularities}}. Consider a causal half-curve $\gamma$ parametrized by $v$ in $(M, g)$, i.e., $\gamma: [0, v_+]\mapsto M$, such that $v_+ \leq +\infty$. At each of the tangent spaces at each point in $M$, we can choose a set of four orthonormal basis vectors. In turn, the choice of a set of four orthonormal basis vectors at each point in $M$ defines the \emph{frame field}. Denote the frame field along $\gamma(v)$ as $e^a_i(v)$. Given $e^a_i(0)$, we can define $e^a_0(v)$ for each of the other points along $\gamma(v)$ using parallel transport. The components of any vector $V$ tangent to $\gamma(v)$ can be written as $V^a = \Sigma^4_{i=1}X^i(v)e^a_i(v)$. Using this vector, one can define the generalized affine length g.a.l. along $\gamma(v)$:

\begin{equation}
    \text{g.a.l.} = \int_0^{v _+}\sqrt{\sum_{i=1}^4 \left(X^i(v^*)\right)^2}dv^*
\end{equation}

\noindent Intuitively, we've mapped $\gamma(v)$ into a spacetime with Euclidean signature and determined the length of the curve in the new space. The g.a.l. is not physically significant because the g.a.l. is not invariant across differing choices of basis vectors $e^a_i(v)$. However, whether the g.a.l. is finite or infinite is invariant; hence, whether the g.a.l. is finite or infinite is physically meaningful. Intuitively, if the g.a.l. is finite for a maximally extended timelike or null curve, then spacetime includes a boundary. A half-curve $\gamma$ is \emph{b-complete} just in case $\gamma$ extends to infinite generalized affine length. In turn, $\gamma$ is \emph{b-incomplete} just in case $\gamma$ is not b-complete.

A b-incomplete spacetime can be stably causal and yet fail to have a beginning in any intuitively compelling sense. For example, Minkowski spacetime with one point removed includes both complete and incomplete geodesics. The intuitive sense of a beginning involves a singularity to the past of every spacetime point and excludes the possibility of any curve extending to past timelike infinity. A spacetime $(M, g)$ is \emph{everywhere past b-incomplete} if, for every point $p \in M$, all of the maximally extended timelike and null geodesics through $p$ are b-incomplete in the past direction. In turn, the \emph{B-Incompleteness Condition} states that spacetime is everywhere past b-incomplete.

In sum, there are two necessary conditions for a spacetime to have a beginning in an intuitive sense: spacetime must be stably causal and everywhere past b-incomplete. I do not claim that the two conditions are sufficient. However, the two conditions are satisfied by models of the universe that begin with a Big Bang. Not all classic models of the Big Bang include a Big Bang \emph{singularity}. But those that do are both stably causal and b-incomplete. Moreover, if sense can be made of a cosmic beginning in some future successor to General Relativity, some similar pair of conditions will likely need to be satisfied.

\section{The Malament-Manchak Theorems\label{mmt-section}}

Having defined a set of conditions necessary for a spacetime to have a beginning, I turn to considering the Malament-Manchak theorems. I will also sketch their proofs, because we will need the corresponding constructions later.

\begin{figure}[H]
  \centering
  \input{Clothesline}
  \caption{The clothesline construction used in proving theorem \ref{MM1}. Diagram based on \citep[41]{Manchak:2020}.}\label{clothesline}
\end{figure}
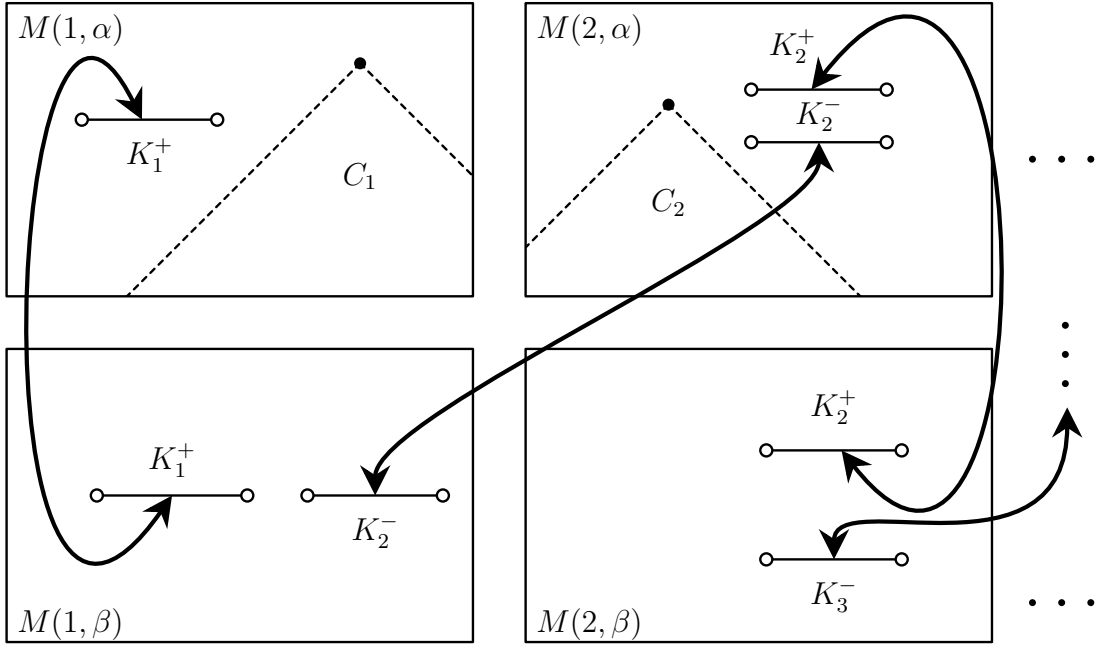

\begin{theorem}\label{MM1}
    Let $(M, g)$ be any non-bizarre spacetime satisfying any set $C$ of local conditions. Then there exists $(M', g')$, not isometric to $(M, g)$, such that (i) $(M, g)$ satisfies $C$ and (ii) $(M, g)$ is observationally indistinguishable from $(M', g')$. (Established in \citep{Manchak:2009}.)
\end{theorem}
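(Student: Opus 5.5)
Note first that the statement as written has a typo: clause (i) should say that $(M', g')$ satisfies $C$. That is the content of Manchak's theorem, and it is what I will prove. The plan is the ``clothesline'' construction of Figure \ref{clothesline}. Build $(M', g')$ by gluing together countably many copies of pasts of points of $(M, g)$. Every point of $M'$ will then have a neighbourhood isometric to a neighbourhood in $M$, so all local conditions in $C$ carry over. The gluing will also preserve each $I^-(p)$, which gives observational indistinguishability.

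\textbf{Step 1: choose the pasts.} Since $(M,g)$ is non-bizarre, $I^-(p) \neq M$ for every $p$. Take a countable dense set of points $\{p_i\}$ in $M$; it exists because $M$ is second countable. Each $I^-(p_i)$ is an open set and a proper subset of $M$. For each $i$, take a copy $(M_i, g_i)$ of $(M, g)$. Cut along a small slit placed outside the closure of $I^-(p_i)$; this is possible because that closure misses some open region, by non-bizarreness together with openness of the chronological past. Glue consecutive copies $M_i$ and $M_{i+1}$ along their slits, the standard "hose" identification of opposite edges, so the copies hang like clothes on a line.

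\textbf{Step 2: check the properties.}
\begin{enumerate}
\item \emph{Local conditions.} Away from the slit edges, every point of $M'$ has a neighbourhood isometric to one in $M$. To avoid the edge points, I will use the standard trick of removing the slit boundaries (closed sets), so the glued manifold is smooth and locally isometric to $M$. Hence $M'$ satisfies $C$.
\item \emph{Observational indistinguishability.} For $p \in M$, I need some $p' \in M'$ with $I^-(p')$ isometric to $I^-(p)$. Density of $\{p_i\}$ only shows that $p$ is close to some $p_i$, which is not enough. Manchak's actual route avoids this: for each $p$ he takes a copy of $I^-(p)$ itself, using a countable subfamily together with the fact that the copies can be arranged so that every $I^-(p)$ appears. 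The cleanest version is to build $M'$ as the union over all $p$ of the pasts $I^-(p)$, each sitting in its own copy of $M$. The slits must then lie outside every relevant past inside their copy, so that timelike curves into $p'$ never cross a slit and $I^-(p')$ is exactly the original $I^-(p)$. To handle paracompactness, I will restrict to countably many copies indexed by a dense set and argue that every $p$ lies in a copy whose slit sits outside $I^-(p)$.
\end{enumerate}

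\textbf{Step 3: non-isometry.} $M'$ contains infinitely many handles or slits, or more directly it differs in global topology or causal structure. For example, there are curves passing between copies, which gives $M'$ the wrong fundamental group or causal structure compared with $M$. So $(M', g')$ is not isometric to $(M, g)$.

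\textbf{Main obstacle.} The delicate step is placing the slits so that they avoid the past of the designated point in every copy while still making the gluing connect the copies. The same step must also ensure that the gluing creates no new timelike curves entering $I^-(p')$ from other copies. I expect to handle this by choosing, for each copy, a slit inside the open set $M \setminus \overline{I^-(p_i)}$. Every point of $M$ then needs to be a "good" point for some copy, and the point $p_i$ used in its copy should be $p$ itself, or a point whose past equals $I^-(p)$ after suitable trimming.
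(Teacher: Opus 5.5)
You are right that clause (i) should concern $(M',g')$, and the clothesline is the right construction. But the decisive step, observational indistinguishability, is left unproved, and your diagnosis of what it needs is mistaken. You do not need a copy whose designated point has past \emph{equal} to $I^-(p)$. You only need $I^-(p)\subseteq I^-(p_i)$ for some $i$, and by transitivity of $\ll$ this holds whenever $p\ll p_i$. So the requirement is a countable family with $\bigcup_i I^-(p_i)=M$. This is what the paper means by past light cones $C_1, C_2,\ldots$ that ``completely cover'' $(M,g)$. Your dense set already provides it: $I^+(p)$ is a nonempty open set, so it contains some $p_i$. (Alternatively, apply the Lindel\"of property to the open cover $\{I^-(q): q\in M\}$.) Now let the slits of the $i$-th copy avoid $I^-(p_i)$, and follow a past-directed timelike curve from the copy $p^{(i)}$ of $p$. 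Up to the first moment it would reach a slit, it lies in the copy of $I^-(p)\subseteq I^-(p_i)$, which contains no slit points. So it never reaches a slit and never leaves that copy, and hence $I^-(p^{(i)})$ is isometric to $I^-(p)$. Your sentence that density ``only shows that $p$ is close to some $p_i$'' is exactly where the argument breaks, and the uncountable-union fallback is unnecessary. The same fact also justifies what you left vague, namely that there is room for a slit. If $I^-(p)$ were dense, it would meet every $I^+(q)$, giving $q\ll p$ for all $q$, i.e.\ $I^-(p)=M$. Openness of $I^-(p)$ alone does not give this.

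Second, gluing $M_i$ directly to $M_{i+1}$ is not justified. Cross-identifying two slits yields a smooth metric locally isometric to $(M,g)$ only if the slits have isometric neighbourhoods. For a general $(M,g)$ this forces them to sit at the same location of $M$. That location would have to avoid both $I^-(p_i)$ and $I^-(p_{i+1})$, each interior copy needs two such slits, and nothing guarantees this. The paper handles it with the interleaved copies $M(i,\beta)$, which protect no past. Each can therefore carry a slit at the location $K_i^+$ used in $M(i,\alpha)$ and another at the location $K_{i+1}^-$ used in $M(i+1,\alpha)$. Finally, your Step 3 is not an argument. $(M,g)$ may itself have arbitrarily complicated topology and causal structure, so ``curves passing between copies'' or a changed fundamental group is not an isometry invariant you have shown to differ. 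You need a genuine invariant there; the paper's sketch also leaves this point to Manchak.
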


\begin{proof}
    Let's sketch the proof; for details, see \citep{Manchak:2009}. Start with a countable infinity of copies of $(M, g)$; label these $M(1,\alpha)$, $M(2, \alpha)$, $M(3, \alpha)$, ..., $M(i, \alpha)$, ... Consider a series of past light cones $C_1$, $C_2$, ..., $C_i$, ... that completely cover $(M, g)$. Identify the corresponding light cones in the $M(i, \alpha)$'s, with one light cone identified for each $i$. Next, consider another countably infinite sequence of copies of $(M, g)$ and label these $M(1, \beta)$, $M(2, \beta)$, ..., $M(i, \beta)$, ... Consider a closed spacelike surface $K_1^+$ in $M(1, \alpha)$ such that $K_1^+ \not\subset C_1$ and a corresponding closed spacelike surface $K_1^+$ in $M(1, \alpha)$. Identify the bottom of $K_1^+$ in $M(1, \beta)$ with the top of $K_1^+$ in $M(1, \alpha)$ and vice versa (excluding end points). Likewise, consider a closed spacelike surface $K_2^-$ in $M(2, \alpha)$ such that $K_2^- \not\subset C_2$ and a corresponding closed spacelike surface $K_2^-$ in $M(1, \alpha)$ and vice versa (again excluding end points). Identify the top of $K_2^-$ in $M(1, \beta)$ with the bottom of $K_2^-$ in $M(2, \alpha)$ and vice versa. Repeat this procedure for all of the $M(i, \alpha)$ and $M(i, \beta)$, as in figure \ref{MM1}, so that the entire sequence forms one spacetime, with $M(i, \beta)$ interleaved between $M(i, \alpha)$ and $M(i+1, \alpha)$. The resulting construction is called the \emph{clothesline} of $(M, g)$. Since the clothesline of $(M, g)$ includes duplicates of a collection of light cones that completely covers $(M, g)$, the clothesline is observationally indistinguishable from $(M, g)$.
\end{proof}

\begin{theorem}\label{MM2}
    Let $(M,g)$ be any non-bizarre spacetime satisfying any set $C$ of local conditions and global conditions $G$, where $G = \{$ inextendibility, isotropy, global hyperbolicity, hole free-ness $\}$. Then there exists $(M', g')$, not isometric to $(M, g)$, such that (i) $(M, g)$ satisfies $C$, (ii) $(M, g)$ is observationally indistinguishable from $(M', g')$, and (iii) $(M', g')$ does not satisfy any of the conditions in $G$. (Established in \citep{Manchak:2011}.)
\end{theorem}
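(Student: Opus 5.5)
The plan is to start from the clothesline construction of Theorem \ref{MM1} and then modify it so that every condition in $G$ fails, while keeping each modification local enough that neither the local conditions $C$ nor observational indistinguishability is affected. Write $(M'', g'')$ for the clothesline of $(M, g)$. By Theorem \ref{MM1} it is observationally indistinguishable from $(M, g)$. It also satisfies $C$, because every point of $(M'', g'')$ has a neighbourhood isometric to a neighbourhood in some copy of $(M, g)$; this holds even at points on the identified surfaces $K_i^\pm$, since the gluing is done along isometric strips. Each past light cone $C_i$ lies entirely inside a single copy $M(i,\alpha)$, and the surfaces $K_i^\pm$ avoid the corresponding $C_i$. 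So any further surgery we perform away from the regions containing the $C_i$ leaves the isometric copies of the past light cones intact.

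I would then break the conditions in $G$ one at a time, each inside a region of some $M(i,\beta)$ (or outside all the $C_i$).

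\textbf{Global hyperbolicity.} The clothesline gluing already produces curves that pass from $M(i,\alpha)$ to $M(i,\beta)$ and back. I would show directly that some inextendible causal curve misses any candidate Cauchy surface. Alternatively, I would remove a closed set, such as a point, from one $M(i,\beta)$, which destroys global hyperbolicity outright.

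\textbf{Hole-freeness and inextendibility.} Deleting a closed set (for example a point or a closed ball) from a $\beta$ copy makes the spacetime extendible, since it embeds properly in the version without the deletion. It also makes the spacetime contain a hole, because the domain of dependence of a suitable achronal surface surrounding the deleted region could be enlarged in the extension.

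\textbf{Isotropy.} This is meant as the global condition that no point singles out a preferred spatial direction. It fails once a single copy is modified asymmetrically. A deleted region placed off-centre relative to the cones breaks any isotropy the original may have had.

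The removed set is closed, so the remaining manifold is still a smooth spacetime. Local conditions are preserved because every remaining point keeps its neighbourhood. The past light cones $C_i$ sit in $\alpha$ copies untouched by the deletion, so each $I^-(p)$ for $p \in M$ still has an isometric copy in $(M', g')$. Finally, $(M', g')$ is not isometric to $(M, g)$: in the generic case it is not even globally hyperbolic while $(M,g)$ may be, and in any case its deleted-point extendibility distinguishes it.

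The main obstacle is checking that the past light cones truly avoid the surgery. One subtlety is that $I^-(p')$ in the glued spacetime could be larger than the copy of $C_i$, because curves can enter through the identified $K$ surfaces. Non-bizarreness is what guarantees that cones can be chosen so that the $K_i^\pm$ lie outside their closures. I would handle this first by choosing the $K_i^\pm$ to the future of (or disjoint from the closure of) the relevant $C_i$, so that no past-directed timelike curve from $p'$ reaches a glued surface. With that in place, the remaining checks are routine verifications that each condition in $G$ fails.
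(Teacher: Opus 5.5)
Your proposal is correct and takes essentially the same approach as the paper. You form the clothesline of $(M,g)$, with the slits placed outside the closures of the covering past cones in the $\alpha$ copies (which is where non-bizarreness is used), and then delete a single point from a $\beta$ copy. That deletion breaks inextendibility, hole-freeness, global hyperbolicity and isotropy, while leaving the local conditions and every copied $I^-(p)$ untouched. Your additional justifications, such as extendibility into the undeleted clothesline and a hole witnessed by an achronal surface just to the past of the deleted point, only flesh out what the paper's sketch asserts. The isotropy step is left at the level of assertion in both.
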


\begin{proof}
    Again, I only offer a sketch;  details are in \citep{Manchak:2011}. Begin by constructing the \emph{clothesline} of $(M, g)$. In $M(1, \beta)$, remove one point that is not in $K_1^+$ or $K_2^-$. The resulting spacetime does not satisfy any of the conditions in $G$, but is observationally indistinguishable from $(M, g)$.
\end{proof}

\subsection{Corollaries\label{corollaries-section}}

Let's proceed to some novel results. I prove three corollaries to the Malament-Manchak theorems with skeptical implications concerning our ability to know that there is a beginning for all of spacetime. The first concerns whether we know that spacetime satisfies the application conditions for the various singularity theorems. Some non-singular spacetimes very nearly satisfy one or more of the classic singularity theorems, and only fail to satisfy those theorems because they do not satisfy a fairly specific global condition. For example, as \citet[7]{Senovilla:2021} points out, though de Sitter spacetime has no incomplete timelike geodesics, de Sitter spacetime fails to satisfy one of the classic singularity theorems only because the Cauchy surfaces in de Sitter spacetime are compact. 

Following Senovilla's lead, we can investigate whether spacetimes that satisfy various global conditions required for the classic singularity theorems are observationally indistinguishable from spacetimes that do not satisfy those global conditions. The classic singularity theorems assume global hyperbolicity \citep{Hawking:1965, Hawking:1966}, time orientability \citep{Hawking:1965, Hawking:1966, Geroch:1966}, or the absence of closed timelike curves \citep{Hawking_Penrose:1970}. Denote this set of properties $\mathbb{S}$. Moreover, while there are generalizations of the classic theorems to semi-classical quantum gravity or using modified or averaged energy conditions, such theorems typically assume one or more of the conditions in $\mathbb{S}$, such as global hyperbolicity \citep{Tipler:1978, FewsterGalloway:2011, Wall:2013, Brown_etal:2018, FewsterKontou:2020, Fewster:2022}. Crucially, unless we know that spacetime satisfies one of the conditions in $\mathbb{S}$, we cannot know whether any of the aforementioned singularity theorems apply. Note, too, that spacetimes which fail to be time orientable or that include closed timelike curves are not stably causal. Hence, a spacetime that does not satisfy two of the conditions in $\mathbb{S}$ cannot have a cosmic beginning either.

In what follows, I first prove that any non-bizarre spacetime satisfying at least one of the members of $\mathbb{S}$ has an observationally indistinguishable counterpart that does not satisfy that member. Second, I prove that any non-bizarre spacetime that is everywhere past b-incomplete has an observationally indistinguishable counterpart that is not everywhere past b-incomplete.

\begin{theorem}\label{glob-cons-thm}
    Any non-bizarre spacetime satisfying one of the members of $\mathbb{S}$ has an observationally indistinguishable counterpart that does not satisfy that member of $\mathbb{S}$.
\end{theorem}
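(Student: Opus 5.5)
The plan is to split into three cases, one for each member of $\mathbb{S}$: global hyperbolicity, time orientability, and the absence of closed timelike curves. In each case I would build on the clothesline of $(M, g)$ from the proof of Theorem \ref{MM1}. The clothesline is observationally indistinguishable from $(M, g)$ because every past light cone $C_i$ of $(M, g)$ is reproduced intact in some copy $M(i, \alpha)$. So any modification confined to a region of the clothesline that avoids the reproduced copies of the $C_i$ will preserve observational indistinguishability. This is the same strategy as Theorem \ref{MM2}, and it requires non-bizarreness, which guarantees that each $I^-(p)$ is a proper subset of $M$ and leaves room outside it.

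\textbf{Global hyperbolicity.} This case follows directly from Theorem \ref{MM2}. Removing a point from $M(1,\beta)$ away from $K_1^+$ and $K_2^-$ destroys global hyperbolicity, since the causal diamonds through the hole fail to be compact. No light cone used for indistinguishability is disturbed.

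\textbf{Absence of closed timelike curves.} Here I would introduce causal pathology in a $\beta$-copy. In $M(1,\beta)$, pick two small disjoint spacelike slits $S_1$ and $S_2$, with $S_2$ lying in the chronological future of $S_1$, both chosen away from $K_1^+$ and $K_2^-$. Then apply the standard cut-and-paste (Deutsch--Politzer style) identification: glue the top of $S_2$ to the bottom of $S_1$, and vice versa, excluding endpoints. A timelike curve leaving $S_1$ upward can then reach $S_2$ and re-enter below $S_1$, which gives a closed timelike curve. The identification is a local isometry across the slits, so the result is still a spacetime. The $\alpha$-copies carrying the $C_i$ are untouched, and for each $p\in M$ the corresponding point in $M(i,\alpha)$ still has the right past. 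I would check this by confirming that its chronological past cannot enter the modified $\beta$-region.

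\textbf{Time orientability.} I would use an analogous gluing with a twist. Identify the top of $S_2$ with the bottom of $S_1$ via a map that reverses the time direction, for example by reflecting in a timelike coordinate in a local chart where the metric is locally flat enough. Alternatively, and more simply, quotient a portion of a $\beta$-copy to insert a Misner/Möbius-type non-orientable handle. A loop through the handle then reverses any continuous timelike vector field, so the spacetime is not time orientable.

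The main obstacle is verifying that the past of each reproduced point in $M(i,\alpha)$ is unchanged. Because the $\alpha$ and $\beta$ copies are joined along the $K$ surfaces, causal curves can pass from $\beta$-copies into $\alpha$-copies, and the modified region might leak into some $I^-(p')$. To handle this, I would place each modification in the causal future of the relevant gluing surfaces in $M(1,\beta)$. Then every past-directed curve from a point of $C_i$ in an $\alpha$-copy either stays within the isometric copy or exits through a $K$ surface into a region lying outside the modified part. As a fallback, I would attach an entirely separate modified copy connected only by a future-directed gluing, so that nothing in it lies in any relevant chronological past.
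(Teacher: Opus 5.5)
Your three-case structure matches the paper's: Theorem~\ref{MM2} handles global hyperbolicity, and the other two cases are handled by surgery inside $M(1,\beta)$ away from the $K$-slits. However, both surgical steps rest on a claim that fails for a generic $(M,g)$. You assume that identifying the edges of two slits $S_1$, $S_2$ at different locations, or quotienting a region, or reflecting ``in a chart where the metric is locally flat enough,'' is a local isometry. In Minkowski space it is, because time translations and reflections are isometries. In a curved spacetime without symmetries, no isometry carries a neighborhood of $S_2$ onto one of $S_1$, so the glued metric is generally not even continuous across the slit. No choice of chart helps, since curvature is invariant. (A time reflection would also pair top edges with top edges, not top with bottom.)

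You cannot fix this by restricting to gluings of copies of $(M,g)$ along \emph{corresponding} slits, as the clothesline does. Any such spacetime admits a local isometry onto $(M,g)$, so a closed timelike curve in it would project to one in $M$, and a time orientation of $M$ would pull back to it. Some genuine change of the metric is therefore unavoidable.

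The missing ingredient is the one the paper takes from Lemma~1 of \citep{Manchak:2016b}. Choose an open set $O\subset M(1,\beta)$ outside the neighborhoods $O_1$, $O_2$ of the gluing slits. Replace the metric by one that is flat on some $R\subset O$ and unchanged outside $O$. This is allowed because observational indistinguishability only sees the untouched copies of the $C_i$; it does not require preserving local conditions in $M(1,\beta)$. That is exactly why the paper later has to face the Inductive Objection.

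Inside the flat $R$, your Deutsch--Politzer identification becomes a genuine isometry, which is essentially the Manchak construction the paper cites for the no-closed-timelike-curves case. For time orientability, the paper glues a slit $S'\subset R$ top-to-bottom with a slit $S$ in a flat block $[0,1]^4$. That block's timelike faces are identified by $(t,0,0,0)\mapsto(1-t,1,0,0)$.

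The leakage you call the main obstacle is not one. The $K$-slits in $M(i,\alpha)$ are chosen disjoint from $C_i$. So no past-directed timelike curve from a point of the copy of $C_i$ ever reaches a slit or leaves $M(i,\alpha)$. Anything done inside a $\beta$-copy away from the slits is therefore invisible to those pasts.
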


\begin{proof} 
    I proceed with a proof by cases, where each case represents one of the three members of $\mathbb{S}$.\footnote{Thanks to J.B. Manchak for suggesting this proof in personal correspondence.} For global hyperbolicity, the result immediately follows from Theorem \ref{MM2}. For no closed timelike curves, the result has already been proven in \citep[1057]{Manchak:2016b}. 
    
    For time orientability, begin with the clothesline $(M', g')$ for $(M, g)$. Assume that $O_1 \subset M(1, \beta)$ and $O_2 \subset M(1, \beta)$ are finite regions completely enclosing $K_1$ and $K_2$ respectively (as in the clothesline constructed in \citep{Manchak:2009}). Consider any finite open set $O$ in the $M(1,\beta)$ portion lying outside $O_1 \cup O_2$. By Lemma 1 of \citep{Manchak:2016b}, there exists a spacetime $(M', g'')$ where, for some open set $R \subset O$, $g''$ is flat on $R$ and $g'' = g'$ outside $O$. In this sense, a region in $(M', g')$ may be replaced by a region of Minkowski spacetime without altering $(M', g')$ outside of $O$. Next, consider Minkowski spacetime with standard $(t,x,y,z)$ coordinates and restrict attention to the rectangular block $[0,1]^4$. Identify points in two of the timelike boundaries using the isometry $\Phi: (t,0,0,0) \mapsto (1-t,1,0,0)$. Denote the spacetime that results $(M''', g''')$. By construction, $(M''', g''')$ is not time-orientable. Lastly, we can use the fact that $(M'', g'')$ and the region $R$ in $(M'', g'')$ are flat to bridge them together.  Following a similar procedure to the construction in the proof of Proposition 1 in \citep[1057]{Manchak:2016b}, let $S$ be a closed three-dimensional spacelike surface in $M'''$ and let $S'$ be a closed spacelike three-dimensional surface in $R$. $S$ and $S'$ are constructed so that they have the same size in standard Minkowski coordinates. Identify the ``bottom'' edge of $S$ with the ``top'' edge of $S'$ and the ``top'' edge of $S$ with the ``bottom'' edge of $S'$ (excluding boundary points). The resulting spacetime is not time-orientable. Since the spacetime that results includes a collection of past light cones isometric to a collection that completely covers $(M, g)$, this suffices to complete the proof.
\end{proof}

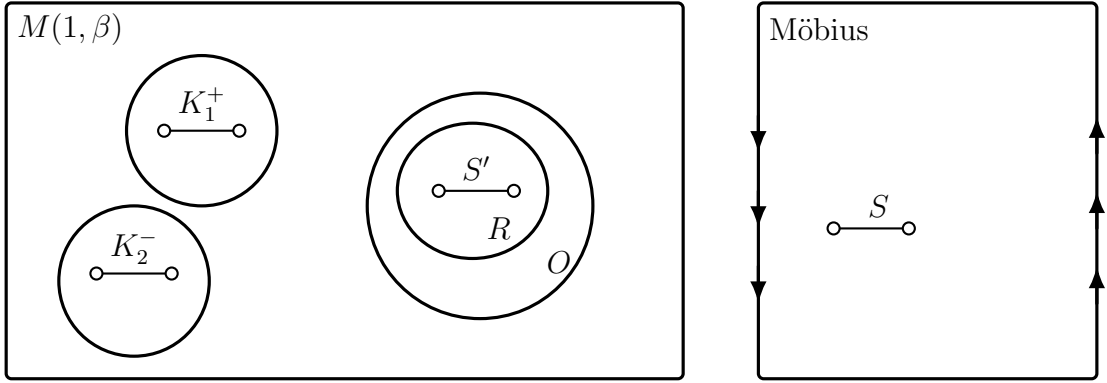
\begin{figure}
  %\centering
  \input{Theorem_3_Diagram}
  \caption{A portion of the construction used in proving Theorem \ref{glob-cons-thm}.}
\end{figure}

\begin{theorem}
    Any non-bizarre stably causal spacetime is observationally indistinguishable from a spacetime that is not stably causal.\label{stably-causal}
\end{theorem}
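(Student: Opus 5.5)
The natural route is to piggyback on Theorem \ref{glob-cons-thm}, using the observation already made in the paper that stable causality implies both chronology (no closed timelike curves) and time orientability. Recall that stable causality is equivalent to the existence of a cosmic time function $t$ whose gradient is everywhere timelike and future-directed. Such a gradient field is a continuous timelike vector field, so a stably causal spacetime is time orientable. Moreover, $t$ strictly increases along every future-directed causal curve, so no closed timelike curve can exist.

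Given this, I plan the following steps. First, let $(M,g)$ be non-bizarre and stably causal. By the remarks above, $(M,g)$ satisfies the member of $\mathbb{S}$ consisting of the absence of closed timelike curves, and also time orientability. Second, apply Theorem \ref{glob-cons-thm} to either member. I would use the chronology case, which the paper attributes to \citep[1057]{Manchak:2016b}. This yields an $(M',g')$ such that $(M,g)$ is observationally indistinguishable from $(M',g')$ and $(M',g')$ contains a closed timelike curve. Third, note that a spacetime with a closed timelike curve admits no cosmic time function, since $t$ would have to strictly increase around a closed loop. Hence $(M',g')$ is not stably causal, which gives the theorem.

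As a backup, I can run the time-orientability case of Theorem \ref{glob-cons-thm} instead. A non-time-orientable spacetime admits no continuous timelike vector field, so in particular it has no timelike gradient $\nabla^\mu t$, and therefore it is not stably causal.

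The only real work is the first step: checking carefully that stable causality entails the relevant member of $\mathbb{S}$. Beyond that, the main point to verify is that the construction invoked inside Theorem \ref{glob-cons-thm} needs only non-bizarreness of the original spacetime, and does not require any of the other global hypotheses. I expect this to go through because the clothesline construction of Theorem \ref{MM1} uses nothing beyond non-bizarreness. If one wanted to avoid relying on the cited result, one could splice into a flat region of $M(1,\beta)$, using Lemma 1 of \citep{Manchak:2016b} as in the proof of Theorem \ref{glob-cons-thm}. The piece spliced in would be a Minkowski block with its $t=0$ and $t=1$ faces identified, which contains closed timelike curves. Observational indistinguishability is preserved because the past light cones covering $(M,g)$ are duplicated in the $M(i,\alpha)$ copies, which the modification does not touch.
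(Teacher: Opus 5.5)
Your proposal is correct and follows the paper's own proof. Stable causality rules out closed timelike curves, the chronology case of Theorem~\ref{glob-cons-thm} supplies an observationally indistinguishable counterpart containing a closed timelike curve, and such a counterpart admits no cosmic time function. Your explicit justifications of these implications, and the time-orientability backup, fill in steps the paper leaves implicit.
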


\begin{proof}
    Stably causal spacetimes do not include closed timelike curves. Theorem \ref{glob-cons-thm} entails that any non-bizarre stably causal spacetime has an observationally indistinguishable counterpart that includes closed timelike curves. Hence, any non-bizarre stably causal spacetime is observationally indistinguishable from a spacetime that is not stably causal.
\end{proof}

\begin{theorem}\label{past-inc-thm}
    Any non-bizarre spacetime $(M, g)$ that is everywhere past b-incomplete has an observationally indistinguishable counterpart $(M', g')$ that is not everywhere past b-incomplete.
\end{theorem}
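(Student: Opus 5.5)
The plan is to mimic the proof of Theorem \ref{glob-cons-thm}. Starting from the clothesline, I will graft in a flat region, and then glue a past-complete piece of Minkowski spacetime into that flat region. The result will contain at least one point through which a past-directed timelike geodesic is b-complete.

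\textbf{Step 1: build the clothesline.} Let $(M', g')$ be the clothesline of $(M, g)$, as in Theorem \ref{MM1}. It contains copies of a family of past light cones $C_1, C_2, \ldots$ that covers $(M, g)$, each sitting inside some $M(i,\alpha)$. So any further modification confined to the $M(1,\beta)$ portion, away from the gluing regions $O_1 \supset K_1$ and $O_2 \supset K_2$, preserves observational indistinguishability. The one thing to check is that every $I^-(p)$, for $p$ in the original $(M,g)$, still sits inside some copy $M(i,\alpha)$ isometrically. This holds because each $I^-(p)$ lies in some $C_i$, and the copies of the $C_i$ lie in the $\alpha$-copies away from the surgery.

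\textbf{Step 2: flatten a region of $M(1,\beta)$.} Pick a finite open set $O \subset M(1,\beta)$ outside $O_1 \cup O_2$. By Lemma 1 of \citep{Manchak:2016b}, exactly as in the proof of Theorem \ref{glob-cons-thm}, I obtain $(M', g'')$ with $g''$ flat on some open $R \subset O$ and $g'' = g'$ outside $O$.

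\textbf{Step 3: glue in a past-complete Minkowski piece.} Take the half of Minkowski spacetime $\{t < 0\}$, or all of Minkowski spacetime, which is past b-complete along every inextendible past-directed timelike geodesic. Choose a closed spacelike three-surface $S'$ in $R$ and a congruent surface $S$ in Minkowski spacetime, of equal size in standard coordinates. Cut along both and cross-identify them as in the clothesline and Theorem \ref{glob-cons-thm}: the top edge of $S$ goes with the bottom edge of $S'$ and vice versa, excluding boundary points. The resulting spacetime $(M^*, g^*)$ agrees with $(M', g')$ outside $O$, so by Step 1 $(M, g)$ is observationally indistinguishable from it.

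\textbf{Step 4: exhibit a past b-complete geodesic.} Choose a point $q$ in the Minkowski piece lying far to the past of $S$, specifically outside $J^+(S)$. The past-directed timelike geodesic $\gamma$ through $q$, e.g. a vertical line $x = \text{const}$ displaced spatially away from $S$, never meets the cut. So $\gamma$ is a complete Minkowski geodesic to the past, with infinite affine length. For a geodesic with an orthonormal parallel-propagated frame, the g.a.l. equals the affine length up to a constant factor, so $\gamma$ is b-complete in the past direction. Hence $(M^*, g^*)$ is not everywhere past b-incomplete.

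\textbf{Main obstacle.} The delicate point is Step 4 combined with the gluing. I must make sure the cut-and-identify operation does not render $\gamma$ past-incomplete. Choosing $\gamma$ disjoint from the closure of $S$ handles this, since the deleted boundary points of $S$ could otherwise create holes along $\gamma$.

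A secondary worry is whether the construction must also preserve local conditions such as Einstein's equations with the same matter content. The statement does not require this, but if desired one should instead glue in a past-complete copy of a suitable vacuum or matter region. Finally, if $(M, g)$ already has past-complete points, the result is trivial; otherwise the above applies.
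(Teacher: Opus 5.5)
Your proposal is correct and follows essentially the same route as the paper. Both build the clothesline and use Lemma 1 of \citep{Manchak:2016b} to flatten a region $R$ of the $M(1,\beta)$ copy away from the slits. Both then cross-identify a slit in $R$ with a slit in a copy of Minkowski spacetime, which supplies past-complete geodesics. Your Step 4 and your check in Step 1 make explicit what the paper only asserts. In Step 4 you pick a geodesic avoiding the closure of $S$ and note that g.a.l.\ is proportional to affine length along a geodesic with a parallel-propagated frame. In Step 1 you check that the chronological pasts in the $\alpha$-copies are untouched by the surgery.
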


\begin{proof}
    Begin with a spacetime $(M, g)$ that is everywhere past b-incomplete and construct the corresponding clothesline.\footnote{Thanks to J.B. Manchak for suggesting this proof in personal correspondence.} Once more, suppose that $O$ is some open set in the $M(1,\beta)$ portion lying outside $O_1 \cup O_2$. Pick a point $p \in O$. Once more, using lemma 1 from \citep{Manchak:2016b}, we can replace a region $R \subset O$, that includes $p$, with a flat region. Using a similar procedure as in the proof for Theorem \ref{glob-cons-thm}, identify a closed, three-dimensional spacelike surface $S$ in $R$ with a corresponding spacelike surface $S'$ in a copy of Minkowski spacetime. Identify the top edge of $S$ with the bottom edge of $S'$ and the top edge of $S'$ with the bottom edge of $S$ (excluding boundary points). Note that, as in figure \ref{thm_5_const}, the Minkowski region includes complete geodesics.
\end{proof}

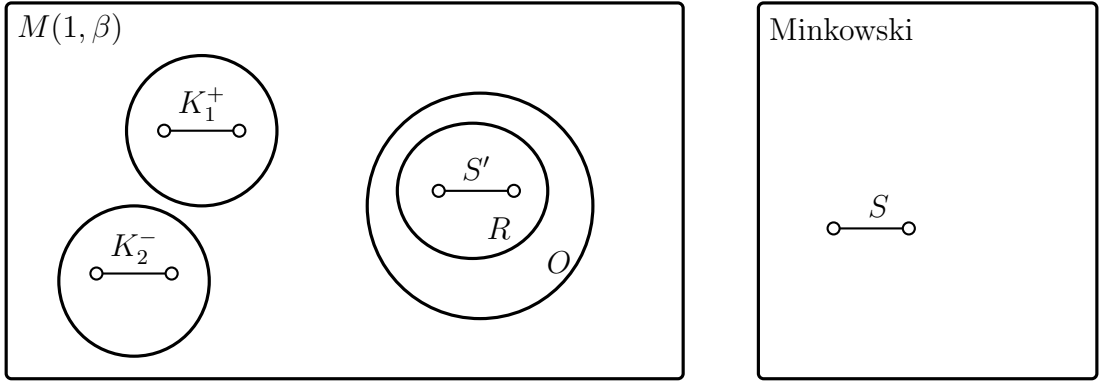
\begin{figure}\label{thm_5_const}
  %\centering
  \input{Theorem_4_Diagram}
  \caption{A portion of the construction used in proving Theorem \ref{past-inc-thm}.}
\end{figure}

\noindent\textbf{Discussion.} Let's take stock. So far, I've shown that all non-bizarre spacetimes satisfying the global conditions required for several singularity theorems have observationally indistinguishable counterparts that do not satisfy the same global conditions. This suggests a skeptical consequence: that no matter how much data we collect, we will likely never be able to infer that the spacetime we inhabit satisfies the application conditions for the various singularity theorems. Since one of the application conditions is the absence of closed timelike curves, this also suffices for showing that any stably causal spacetime is observationally indistinguishable from at least one other spacetime that is not stably causal. This suggests we will likely never be able to infer whether spacetime has the right sort of global temporal ordering that a cosmic beginning requires. I also showed that all non-bizarre spacetimes that are everywhere past b-incomplete have observationally indistinguishable counterparts that are not everywhere past b-incomplete. This suggests that even if we did know that there is a singularity in our past, we would not be able to infer that there is a singular boundary for all of spacetime. In the next section, I turn to an important objection.

\section{The Inductive Objection\label{inductive-results-section}}

Theorems \ref{glob-cons-thm}, \ref{stably-causal}, and \ref{past-inc-thm} suggest we cannot know whether various singularity theorems apply to the spacetime we inhabit, whether our spacetime satisfies the B-Incompleteness Condition, and whether our spacetime is stably causal. Let's call this set the \emph{Skeptical Implications}. Let's also refer to the argument from Theorems \ref{glob-cons-thm}, \ref{stably-causal}, and \ref{past-inc-thm} for the Skeptical Implications as the \emph{Skeptical Case}. There is an important objection that can be made to the Skeptical Case: while spacetime may have observationally indistinguishable counterparts with distinct global features, the counterparts may be ruled out using induction, thereby avoiding the Skeptical Implications.

At least since Nelson Goodman (\citeyear{Goodman:1955}), philosophers of science have distinguished between projectible and non-projectible predicates. Only projectible predicates support inductive inferences. In turn, projectible predicates are those that appear in laws of nature \citep[74]{Goodman:1955}. Most laws of nature are expressed solely in terms of local properties. Theorems \ref{MM1} and \ref{MM2} guarantee that each spacetime and its observationally indistinguishable counterpart satisfy the same local conditions. Since inductive inferences are supported by some set of background laws and most laws are specified in terms of local conditions, the prospects are dim for using induction to escape the skeptical consequences of Theorems \ref{MM1} or \ref{MM2}.\footnote{The only non-local laws currently known involve quantum mechanical effects, such as entanglement. Given that there is no known way to transmit superluminal signals using such effects, non-local quantum effects likely cannot be used to infer the global properties of our spacetime.} 

A similar point can be made without appealing to Goodman. In cases of theoretical underdetermination, the available data does not allow us to choose between two or more alternative theories. But the underdetermination suggested by theorems \ref{MM1} and \ref{MM2} are not cases of \emph{theoretical} underdetermination, since those theorems allow us to hold the background theory (e.g., General Relativity) fixed. Instead, the underdetermination more closely resembles what happens in indeterministic theories, where data from some large spacetime region, such as the entire past, does not suffice for inferring what happens in some other large region, such as the future. For example, holding fixed an indeterministic quantum theory and granting that we know an arbitrarily large amount of data concerning the past, our credence that a radioactive nucleus will decay in the next second should be determined entirely by the chance entailed by the quantum theory. Gathering any amount of additional data concerning the past, while holding our background theory fixed, cannot increase our credence that the nucleus will decay in the next second. Consequently, induction cannot increase our credence. For much the same reason, induction seems poorly suited for responding to the skepticism suggested by theorems \ref{MM1} or \ref{MM2}.

However, the same cannot be said with respect to an inductive response to the Skeptical Case. The proofs of theorems \ref{glob-cons-thm}, \ref{stably-causal}, and \ref{past-inc-thm} involve a spacetime constructed by replacing a portion of a clothesline with a flat region. While one can replace a portion of a spacetime with a flat region, doing so does not necessarily preserve arbitrary local conditions. In particular, local conditions will generally be altered in $O$. Hence, perhaps there is an inductive inference that would evade the Skeptical Implications. Call this the \emph{Inductive Objection}.

Note that the Inductive Objection does not apply to at least some of the Skeptical Implications. Since the sub-proof in Theorem \ref{glob-cons-thm} for global hyperbolicity follows from Theorem \ref{MM2}, induction is no help with respect to the singularity theorems that require global hyperbolicity. Nonetheless, not all singularity theorems require global hyperbolicity, and theorems \ref{glob-cons-thm}, \ref{stably-causal}, and \ref{past-inc-thm} may seem vulnerable to the Inductive Objection. Next, I offer two constructions that are observationally indistinguishable from FLRW spacetimes. The two constructions suggest the Inductive Objection can be evaded after all.

\subsection{Two Nemesis Spacetimes}

The two constructions offered in this section make use of mathematical techniques frequently employed in astrophysics, but not yet in work on observationally indistinguishable spacetimes. The central idea involves ``gluing'' two solutions to the Einstein Field Equations (EFE) along a shared boundary in such a way that the result also solves the EFE. As in the analogous problems in classical electrodynamics, one must ensure that various junction conditions are satisfied at the boundary. These conditions are known as the \emph{Israel Junction Conditions} (IJCs); the conditions and their derivation are sketched in \ref{IJC_appendix}.  When two spacetime regions are joined using the IJCs, the matter-energy distribution often diverges on the boundary. This spike in the matter-energy distribution is known as a \emph{thin shell}. However, astrophysicists treat thin shells as idealized approximations of physically realistic spacetimes where the matter-energy varies smoothly across a transition layer with non-zero thickness \citep{KhakshourniaMansouri:2002, KhosraviKhakshourniaMansouri:2006, DrobovTegai:2013}.

In the \emph{Oppenheimer-Snyder (OS) Model}, an interior Friedmann-Lema{\^i}tre-Robertson-Walker (FLRW) spacetime region is joined, using the IJCs, to an exterior Schwarzschild region through a timelike worldtube. A detailed discussion appears in \ref{OS_appendix}. To satisfy the IJCs, a relationship between the mass parameter $M$ in the exterior and the matter-energy density $\rho$ on the interior must be satisfied; to avoid a thin shell or matter leaking into the exterior region, the interior must have a dust matter-energy content, that is, $T_{\mu\nu} = \text{diag}(\rho, 0, 0, 0)$. Oppenheimer and Snyder studied their model as an approximation of a collapsing star. However, due to the time reversal invariance of the Einstein Field Equations, the FLRW region can be either expanding or collapsing. In a close cousin of the OS model, the \emph{Schwarzschild-Minkowski (SM) model}, an exterior Schwarzschild region is joined, across a thin shell, to an interior Minkowski region. For detailed discussion, see \ref{SM_appendix}. The thin shell resides on a timelike worldtube. To satisfy the IJCs, the matter-energy density $\sigma$ in the thin shell must satisfy a specific relationship with the mass parameter $M$ in the Schwarzschild exterior.

Let's turn to the first construction. If spacetime $A$ has observationally indistinguishable counterpart $B$, then $B$ is said to be a \emph{nemesis} for $A$. Theorem \ref{past-inc-thm} entails that, for any FLRW spacetime that is everywhere past b-incomplete, there exists a nemesis spacetime that is not everywhere b-incomplete. Let's examine a specific subset of these nemesis spacetimes. Begin with any FLRW spacetime $(M, g)$ with matter-energy content $T_{\mu\nu} = \text{diag}(\rho, 0, 0, 0)$, where $\rho > 0$, that is, with positive density dust. From $(M, g)$, construct the corresponding clothesline $(M',g')$. Recall that, in $(M', g')$, the $M(1, \alpha)$ and $M(2, \alpha)$ regions are connected, via closed spacelike surfaces, $K_1$ and $K_2$, to $M(1, \beta)$. We can replace $M(1, \beta)$ by the appropriate time reversed OS model, so long as the appropriate spacelike surfaces in the FLRW region of the OS model are identified with spacelike surfaces in $M(1, \alpha)$ and $M(2, \alpha)$. Note that the FLRW region in the OS model should be a copy of the corresponding region in $(M,g)$.  See figure \ref{OSModel}. The mass parameter $M$ in the Schwarzschild region of the OS model and the matter-energy in the FLRW region should be chosen so that the IJCs are satisfied without a thin shell. The resulting model is not everywhere past b-incomplete, because there are complete timelike and null geodesics in the Schwarzschild portion of the OS model. Let's refer to this model as the clothesline-OS model.

One can show that the clothesline-OS model satisfies a number of local properties of interest. For example, since $(M', g')$ and the OS model are both solutions to the EFE, the clothesline-OS model is a solution to the EFE. Moreover, all of the standard energy conditions are satisfied by the clothesline-OS model. For dust, the null, weak, dominant, and strong energy conditions reduce to $\rho \geq 0$. By construction, this condition is satisfied by $(M, g)$ and by $(M', g')$. Since the OS model was constructed to avoid a thin shell, the OS model also satisfies this condition. Consequently, the standard energy conditions are satisfied by the clothesline-OS model. Local energy conservation (i.e., $\nabla^\mu T_{\mu\nu} = 0$) is also satisfied, as follows from the fact that the EFE is satisfied.\footnote{In general, the contracted Bianchi identity and the EFE entail that $\nabla^\mu T_{\mu\nu} = 0$. For the OS model specifically, we can check $\nabla^\mu(\Theta (f)\rho u_\mu u_\nu) = 0$, where $\Theta$ is a step function and $f$ is a monotonic function passing through zero at $\Sigma$. Taking the covariant derivative yields $\Theta (f)\nabla^\mu (\rho u_\mu u_\nu) + \delta(f) (\nabla^\mu f)\rho u_\mu u_\nu$. The first term is zero because dust satisfies local energy conservation. The second term is zero if $\nabla^\mu f$ and $u_\mu$ are perpendicular. Since $f$ is a function that changes only in a direction perpendicular to $\Sigma$, $\nabla^\mu f$ must be perpendicular to $\Sigma$. Moreover, since any particle on $\Sigma$ follows a dust worldline, $u_\mu$ must be tangent to $\Sigma$. Ergo, $\nabla^\mu f$ and $u_\mu$ are perpendicular and local energy conservation is satisfied.} Other local conditions are preserved as well; for example, the metric has Lorentz signature everywhere and each region admits standard lightcone structure. While the spacetime may be globally peculiar, there is a neighborhood of every point that ``looks'' like a region from a fairly mundane spacetime. In fact, given the status of the OS model in astrophysical modeling, we should expect that any local property astrophysicists think is physically reasonable is satisfied either by the clothesline-OS model or by a closely related model with a ``thick'' shell.

The second construction is closely related. Theorem \ref{glob-cons-thm} entails that any FLRW spacetime, with dust stress-energy content, has a nemesis spacetime that does not satisfy one of the members of $\mathbb{S}$. Here, I will construct a nemesis spacetime that does not satisfy any of the members of $\mathbb{S}$ but which, like the clothesline-OS model, satisfies the EFE, the various energy conditions, local energy conservation, and other local properties of interest. The model is constructed by performing surgery on the clothesline-OS model. A region from an SM model is introduced, which is joined to the original Schwarzschid region via a spacelike surface $S_1$, not including boundary, as in figure \ref{MSModel}. To ensure that the regons can be matched across $S_1$, the mass parameter from the Schwarzschild region in the SM model should be equal to the mass parameter in the original Schwarzschild region. The density $\sigma$ of the shell in the SM region should be appropriately chosen based on the mass parameter $M$ associated with the exterior Schwarzschild region, as in \ref{SM_appendix}. In the Minkowski region of the SM model, a spacelike surface $S_2$ can be identified with a spacelike surface in a M{\"o}bius strip, similar to the construction used for theorem \ref{glob-cons-thm}. Call the resulting construction the clothesline-SM model. The clothesline-SM model is not globally hyperbolic or time orientable, and includes closed timelike curves. Furthermore, the clothesline-SM model is clearly not stably causal, even though the corresponding FLRW spacetime is. The clothesline-SM model is an exact solution of the EFE, since all of the subregions are exact solutions of the EFE, and they are joined together in a way that ensures the model satisfies the EFE as a whole. Moreover, since all of the regions in the clothesline-SM model are either vacuum or positive energy dust, all of the standard energy conditions are satisfied. The clothesline-SM model, like the clothesline-OS model, also satisfies local energy conservation. The metric has Lorentz signature and standard lightcone structure throughout.

The thin shell may not be physically realistic, nor the reduced regularity requirements the thin shell implies. Nonetheless, astrophysicists usually accept the SM model as an approximation to a more realistic model with a ``thick'' shell, where a shockwave collapses to form a black hole. In many standard applications, physicists adopt similar idealizations, e.g., stellar surfaces, shock waves in fluids, and surface charge distributions in electromagnetism. Instead of regarding reduced regularity as physically unrealistic, surfaces with reduced regularity are understood to approximate a rapid transition occurring on small length or time scales. Moreover, the existence of such reduced-regularity solutions is widely understood to provide strong evidence for the existence of nearby solutions satisfying higher regularity conditions. Rejecting such models merely on the grounds that they violate a regularity condition would require abandoning many empirically successful and widely employed models.

The objection can be made to both of the constructions that they require the FLRW region be dust. The actual universe can be approximated as dust at late times, but the early universe cannot be. However, an FLRW region, with non-vanishing pressure, can be matched to a region from a generalization of Schwarzschild spacetime, called \emph{Vaidya spacetime} \citep{Fayos:1991}. The FLRW/Vaidya matching can be done in a way that satisfies the IJCs without a thin shell and that satisfies all of the standard energy conditions. With non-zero pressure, material from the FLRW interior ``leaks'' into the exterior. But this can happen only after the FLRW region begins to expand from the singularity, so that the ``bottom'' portion of the clothesline-OS model or of the clothesline-SM model can be left unmodified. Other generalizations are possible as well. For example, an FLRW region with non-zero cosmological constant can be matched with a Kottler region \citep[491]{EllisMaartensMacCallum:2012}, i.e., to a generalization of Schwarzschild spacetime with non-zero cosmological constant.

Now recall the Inductive Objection. According to that objection, arbitrary local conditions are not preserved by the constructions required for theorems \ref{glob-cons-thm}, \ref{stably-causal}, or \ref{past-inc-thm}. However, FLRW spacetimes generally have observationally indistinguishable counterparts that preserve several local conditions of interest. Moreover, we have good reason to believe that either the constructions satisfy local conditions astrophysicists think are reasonable or else approximate closely related spacetimes that do satisfy physically reasonable local conditions.  Since the observable universe can be modeled as an FLRW spacetime, we consequently have evidence that there is a nemesis spacetime for the actual universe satisfying such local conditions. In turn, it's difficult to see how there could be a projectible predicate or law that would motivate an inductive response to the Skeptical Case. Hence, we do not have good reason to endorse the Inductive Objection.

\begin{figure}\label{OSModel}
  \centering
  \input{OSModelDiagram}
  \caption{The region that replaces $M(1, \beta)$ in the clothesline-OS model.}
\end{figure}
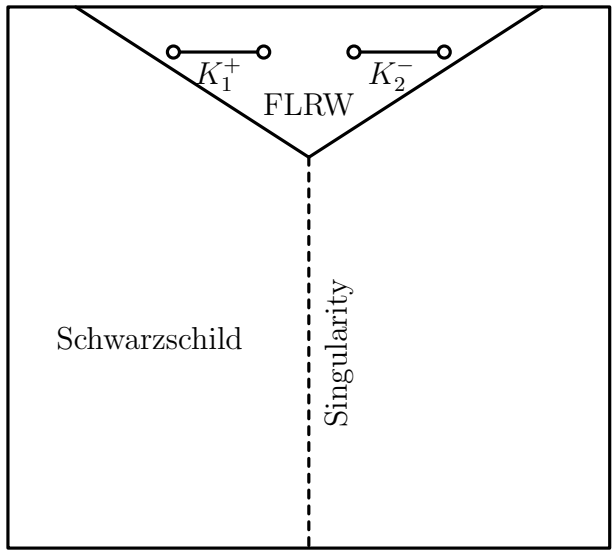

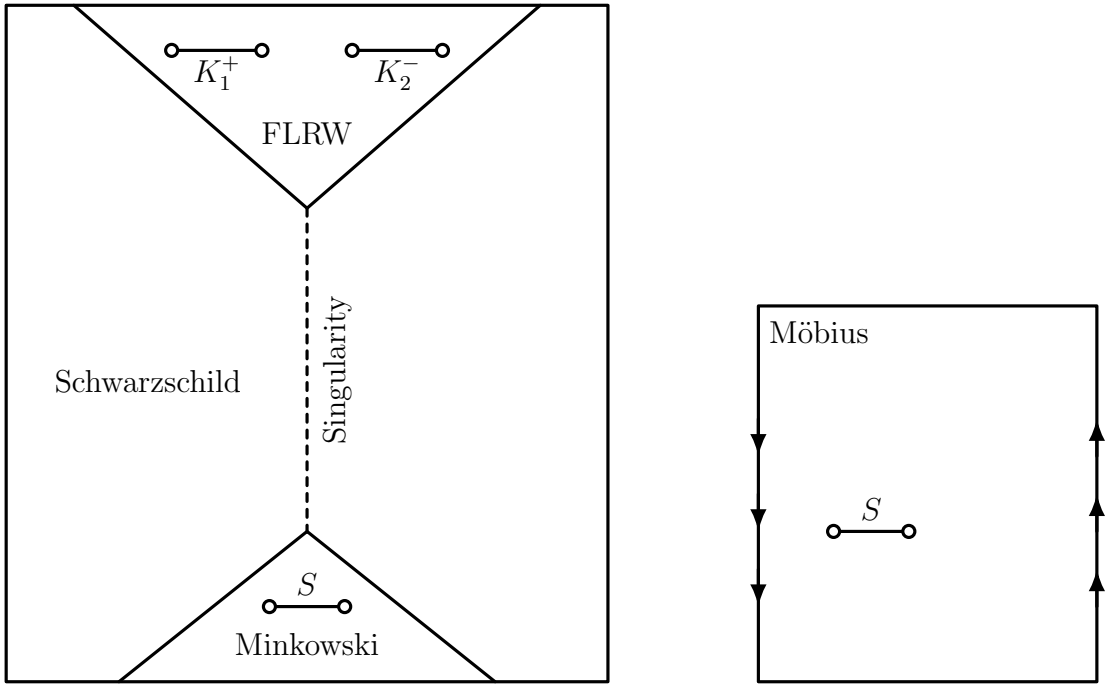
\begin{figure}\label{MSModel}
  \centering
  \input{MinkowskiSchwarzschild}
  \caption{The region that replaces $M(1, \beta)$ in the clothesline-SM model.}
\end{figure}

\section{Conclusion\label{conclusion}}

I've presented several results suggesting that we cannot know whether spacetime had a beginning. First, a popular way of arguing either for or against a cosmic beginning -- by showing that various specific cosmological models suffer from insuperable difficulties -- commits an elementary error in confirmation theory. Second, I showed that while various singularity theorems require spacetime to fulfill one or more global conditions,  non-bizarre spacetimes satisfying such conditions are observationally indistinguishable from spacetimes that do not. Third, any non-bizarre spacetime that is stably causal is observationally indistinguishable from one that is not; for this reason, any spacetime that has the sort of time ordering required for a beginning is observationally indistinguishable from another spacetime that does not. Fourth, any non-bizarre spacetime that is everywhere past b-incomplete is observationally indistinguishable from some other spacetime that is not everywhere past b-incomplete. This suggests that even if we knew that our past is b-incomplete, we couldn't justifiably infer that our spacetime is everywhere past b-incomplete. Fifth, I considered an objection according to which induction might be used to infer that there was a cosmic beginning after all. I replied by constructing two of the spacetimes that are observationally indistinguishable from FLRW spacetimes. These nemesis spacetimes do not satisfy one or both of the conditions for a cosmic beginning, suggesting that induction does not provide a way to infer a cosmic beginning either. Together, these results support agnosticism concerning a cosmic beginning.\\

\noindent\textbf{Acknowledgments.} Thanks to J.B. Manchak for suggesting several of the mathematical techniques used in this paper and for providing valuable critical feedback. Without his guidance, I would not have been able to write this paper. Also, thanks to Sarah Geller, Enrique Gaztañaga, Levi Greenwood, Ettore Minguzzi, Eric Poisson, James Read, J.M.M. Senovilla, and Balša Terzić; all eight provided valuable conversations or correspondence.

\appendix
\renewcommand{\thesection}{\appendixname~\Alph{section}}
\section{The Israel Junction Conditions}\label{IJC_appendix}

In this appendix, I introduce and sketch a derivation of the Israel Junction Conditions. For further detail and applications, the reader is encouraged to consult standard references \citep{Israel:1966, MisnerThorneWheeler:1973,Poisson:2004, EllisMaartensMacCallum:2012, Art:2023}. Relativistic spacetimes are often assumed to be Hausdorff, paracompact $C^\infty$ manifolds with smooth Lorentzian metrics \citep[12]{Wald:1984}, but such requirements are frequently relaxed when considering the joining of two spacetime regions \citep{ClarkeDray:1987}. To join two spacetime regions at a shared boundary, it is often necessary to generalize the metric and other quantities of interest to be valid in the distributional sense. In that context, the first derivative of the metric may be discontinuous at the boundary, leading to a delta function contribution to the curvature. Ergo, the metric must be $C^2$ on either side of the boundary,  and can be $C^0$ across the boundary.

Pursuing this thought, consider two $C^3$ manifolds $\mathcal{V}^+$ and $\mathcal{V}^-$, with boundaries $\Sigma^+ \subset \mathcal{V}^+$, $\Sigma^- \subset \mathcal{V}^-$, equipped with $C^2$ Lorentzian metrics $g^+$ and $g^-$ respectively. Identify the points in $\Sigma^+$ and $\Sigma^-$ via the $C^3$ diffeomorphism $\phi:\Sigma^+\mapsto\Sigma^-$, which is assumed to have a $C^3$ inverse. Once identified, we can define $\Sigma\equiv\Sigma^\pm$ and $\mathcal{V} \equiv \{\mathcal{V}^+ \cup \Sigma \cup \mathcal{V}^-\}$. For the sake of convenience, define a set of coordinates $x^\mu$ over $\mathcal{V}$. The coordinate dependence is assumed for the sake of convenience and will later be dropped. Define a scalar field $l(x^\mu)$ on $\mathcal{V}$. Impose three restrictions on $l(x^\mu)$: (i) $l(x^\mu) < 0$ for $x^{\mu} \in \mathcal{V}^-$, (ii) $l(x^\mu) = 0$ for $x^\alpha \in \Sigma$, and (iii) $l(x^\mu) > 0$ for $x^{\mu} \in \mathcal{V}^+$. Then, for the metric $g$ defined over all of $\mathcal{V}$, we have:

\begin{equation}
    g_{\alpha\beta} = \Theta(l)g^+_{\alpha\beta}+\Theta(-l)g^-_{\alpha\beta}
\end{equation}

\noindent where $\Theta$ is the Heaviside function, i.e., $\Theta(l) = 1$ for $l > 0$, $\Theta(l) = 0$ for $l < 0$, and $\Theta(l = 0)$ is undefined. The following notation will be useful as well. For the distributional Einstein Field Equation (dEFE) to be satisfied, the Christoffel symbol will need to be distributionally valid. The Christoffel symbol $\Gamma^\alpha_{\beta\gamma}$ is defined as:

\begin{equation}
    \Gamma^\alpha_{\beta\gamma} = \frac{1}{2}g^{\alpha\lambda}\left(g_{\lambda\beta,\gamma}+g_{\lambda\gamma,\beta}-g_{\beta\gamma,\lambda}\right) 
\end{equation}

\noindent where commas indicate distributional partial derivatives. Evaluating the distributional partial derivatives, we find:

\begin{equation}\label{derivative}
    \partial_\lambda g_{\alpha\beta} = \Theta(l)g^+_{\alpha\beta,\lambda}+\Theta(-l)g^-_{\alpha\beta,\lambda} +  \left(g^+_{\alpha\beta}-g^-_{\alpha\beta}\right)\delta(l)\partial_\lambda l
\end{equation}

\noindent Thus, the Christoffel symbol will generally include terms involving the product of a Heaviside function and a $\delta$-function. Since the product of a Heaviside function and a $\delta$-function is not well-defined in the distributional sense, the term involving $\delta(l)$ in \ref{derivative} must vanish. Hence, the metric is continuous at $\Sigma$:

\begin{equation}\label{pre-FIJC}
    [g_{\alpha\beta}] = 0
\end{equation}

\noindent We can now relax the coordinate dependence by expressing equation \ref{pre-FIJC} in terms of quantities intrinsic to $\Sigma$. Suppose we have some coordinates $y_a$ on $\Sigma$. We can define a set of holonomic basis vectors in $\Sigma$, i.e., $e^\alpha_a \equiv \partial x^\alpha/\partial y^a$. In turn, the metric induced on $\Sigma$ can be written in terms of $h_{ab}$, the metric restricted to $\Sigma$, as $h_{ab} \equiv g_{\alpha\beta}e^\alpha_a e^\beta_b$. The left hand side of equation \ref{pre-FIJC} can be re-written as $[g_{\alpha\beta}e^\alpha_a e^\beta_b]$. Hence, the first IJC:

\begin{equation}
    [h_{ab}] = 0
\end{equation}

\noindent The first IJC is needed to ensure that the matching of $\Sigma^+$ and $\Sigma^-$ is geometrically sensible.

In the case of the ordinary EFE, so long as one considers a sufficiently smooth manifold, one can always construct a corresponding matter-energy distribution. The same is true for the dEFE; the first IJC ensured a sensible geometry, while the second provides a condition on the matter-energy distribution so that the dEFE is satisfied. The Einstein tensor is calculated from partial derivatives of the Christoffel symbol, which can be written as

\begin{equation}\label{Christ-Symbol}
    \Gamma^\alpha_{\beta\gamma,\delta} = \Theta(l)\Gamma^{\alpha+}_{\beta\gamma,\delta} + \Theta(-l)\Gamma^{\alpha-}_{\beta\gamma,\delta} + [\Gamma^{\alpha}_{\beta\gamma}]\delta(l)\partial_\delta l
\end{equation}

\noindent Though $\Gamma^\alpha_{\beta\gamma,\delta}$ involves a $\delta$-function, this will not result in the product of a Heaviside function and $\delta$-function in the Riemann tensor $R^\alpha_{\beta\gamma\delta}$. Instead, we have:

\begin{equation}\label{R-equation}
    R^\alpha_{\beta\gamma\delta} = \Theta(l)R^{\alpha+}_{\beta\gamma\delta} + \Theta(-l)R^{\alpha-}_{\beta\gamma\delta} + \delta(l)A^\alpha_{\beta\gamma\delta}
\end{equation}

\noindent where $A^\alpha_{\beta\gamma\delta} \equiv [\Gamma^\alpha_{\beta\delta}]\partial_\gamma l - [\Gamma^\alpha_{\beta\gamma}]\partial_\delta l$. The demand that there is no $\delta$-function term is equivalent to $A^\alpha_{\beta\gamma\delta} = 0$ and would avoid a singularity in the Riemann, Einstein, or stress-energy tensors. The extrinsic curvature of a hypersurface is defined as 

\begin{equation}
    K_{ab} \equiv e^\alpha_a e^\beta_b \nabla_\alpha n_\beta
\end{equation}

\noindent In the case where $A^\alpha_{\beta\gamma\delta} = 0$, one can show that $[K_{ab}]=0$, so that the extrinsic curvature of $\Sigma$ is the same when considered from either side.

However, there is a well-known physical interpretation for the case where $A^\alpha_{\beta\gamma\delta} \neq 0$. Using \ref{Christ-Symbol} and \ref{R-equation}, one obtains an expression for the Einstein Tensor:

\begin{equation}\label{Einstein-Tensor}
    G_{\alpha\beta} = \Theta(l)G^+_{\alpha\beta} + \Theta(-l)G^-_{\alpha\beta} + \delta(l)E_{\alpha\beta}
\end{equation}

\noindent where $E_{\alpha\beta} \equiv A_{\alpha\beta}-\frac{1}{2}Ag_{\alpha\beta}$.  When $A^\alpha_{\beta\gamma\delta}$ is non-vanishing, there is a $\delta$-function in the Einstein tensor. There is a corresponding $\delta$-function term in the stress energy tensor, corresponding to a thin shell of matter-energy on $\Sigma$. Using \ref{Einstein-Tensor} and the dEFE, one can show that the extrinsic curvature on $\Sigma$ is discontinuous: 

\begin{equation}
    [K_{ab}] = -8\pi\left(S_{ab} - \frac{1}{2}S h_{ab}\right)
\end{equation}

\noindent where $S_{ab}$ is the surface stress-energy on $\Sigma$ and $S \equiv h^{ab}S_{ab}$ its trace. Equivalently, one can show

\begin{equation}
    S_{ab} = -\frac{1}{8\pi}\left([K_{ab}] - \frac{1}{2} [K]\, h_{ab}\right)
\end{equation}

\noindent This is the second IJC, i.e., that either $[K_{ab}] = 0$ or else the discontinuity in $K_{ab}$ is compensated for by stress-energy on $\Sigma$. After deriving the thin shell required to compensate for a given discontinuity in the extrinsic curvature, one can check whether $S_{ab}$ is consistent with, e.g., the various energy conditions.

\section{The Oppenheimer-Snyder Model}\label{OS_appendix}

The Oppenheimer-Snyder model represents a uniform and spherically symmetric dust cloud, with vacuum exterior, that collapses to form a black hole. By Birkhoff's theorem, the exterior to the cloud admits a Schwarzschild metric, with line element

\begin{equation}
    (ds^2)^+ = -FdT^2 + F^{-1}dR^2 + R^2d\Omega^2
\end{equation}

\noindent where $F \equiv 1-2M/R$ and $d\Omega^2 \equiv d\theta^2 + \sin^2\theta d\phi^2$. The collapsing dust cloud is modeled as a region admitting the FLRW line element

\begin{equation}\label{FLRW_line_element}
    (ds^2)_- = -dt^2 + a(t)^2\left(d\chi^2 + S_k(\chi)^2\,d\Omega^2\right)
\end{equation}

\noindent where $a(t)$ is the scale factor, $\chi$ is a co-moving ``radial'' coordinate, and

\begin{equation}
    S_k(\chi) \equiv \begin{cases}\sin\chi & (k=+1)\\ \chi & (k=0)\\ \sinh\chi &
    (k=-1)\end{cases}, \qquad C_k(\chi)\equiv \frac{dS_k}{d\chi} =
    \begin{cases}\cos\chi\\ 1\\ \cosh\chi\end{cases}
\end{equation}

\noindent The two regions are matched through a timelike surface $\Sigma$. On constant-time surfaces, $\Sigma$ is a sphere. The stress-energy content is dust, i.e., $T_{\mu\nu}=\text{diag}(\rho, 0, 0, 0)$. Define $\tau$ as the time kept by a clock that remains on $\Sigma$.

The first IJC is fulfilled because the induced metric can be matched from both sides. An embedding for $\Sigma$ can be defined in terms of the interior co-moving coordinate $\chi=\chi_0$ (constant), i.e., $x^-_\alpha(\tau, \theta, \phi) = (t=\tau, \chi=\chi_0, \theta=\theta, \phi=\phi)$. Likewise, there is an embedding for $\Sigma$ in terms of exterior coordinates, i.e., $x^+_\alpha(t, \theta, \phi) = ( T = T(\tau), R = R(\tau), \theta = \theta, \phi = \phi)$. The Jacobian relating the interior or exterior coordinates to the embedding coordinates is $(e^\alpha_a)^\pm = (\partial x^\alpha/\partial y^a)^\pm$. For the interior, $(e^\alpha_a)^- = \text{diag}(1, 0, 1, 1)$, while, for the exterior,

\begin{equation}
    (e^\alpha_a)^+ = \begin{pmatrix}
                    \dot{T} & 0 & 0 \\
                    \dot{R} & 0 & 0 \\
                    0 & 1 & 0 \\
                    0 & 0 & 1 \\
                \end{pmatrix}
\end{equation}

\noindent The induced metric $h_{ab}$ is defined as $h^\pm_{ab} = g^\pm_{\alpha\beta}e^\alpha_a e^\beta_b$. For the interior, $h_{ab}^- = \text{diag}(-1, a^2(t)S_k^2(\chi_0), a^2(t)S_k^2(\chi_0)\sin^2\theta)$, while, for the exterior, $h_{ab}^+ = \text{diag}(-F\dot{T}^2+F^{-1}\dot{R}^2, R^2, R^2\sin^2\theta)$. Matching the interior induced metric with the exterior yields two conditions. First, $-1 =-F\dot{T}^2 + F^{-1}\dot{R}^2$, and, second, $R = a(t)S_k(\chi_0)$.

For the second IJC, the extrinsic curvature $K^\pm_{ab}$ can also be computed using the Jacobian, i.e., $K_{ab} = e^\alpha_a e^\beta_b \nabla_\alpha n_\beta$, where $n_\beta$ is a vector normal to $\Sigma$. To compute $n_\beta$ in terms of the interior coordinates, it is convenient to define a scalar field $\ell$ which monotonically increases from the interior to the exterior of $\Sigma$. A good choice is $\ell = \chi - \chi_0$. The normal vector can then be computed using $n_\mu = \partial_\mu \ell / \sqrt{|g^{\alpha\beta}(\partial_\alpha \ell)(\partial_\beta \ell)|}$, where the only non-zero component is $n^-_r = a(\tau)$. This yields $K_{ab}^- = \text{diag}(0, a(\tau)S_k(\chi_0)C_k(\chi_0), K_{\theta\theta}^-\sin^2\theta)$.

Next, compute $K_{ab}^+$. Due to spherical symmetry, $n^+_\mu$ has, at most, $T$ and $R$ components, i.e., $n_\mu^+ = (n_T, n_R, 0, 0)$. An orthogonality condition, i.e., $n_\mu e^\mu_\tau = 0$, implies $n_\mu^+ (e^\mu_\tau)^+ =  -\frac{\dot{R}}{\dot{T}}n_R$. Demanding that $n_\mu$ is normalized entails $n_R^2 \left( -F^{-1}\dot{R}^2 + F\dot{T}^2 \right) = \dot{T}^2$. The trajectory of any particle on $\Sigma$ is timelike, with four-velocity $u^\mu = (\dot{T}, \dot{R}, 0, 0)$. Since $u^\mu$ is timelike, $g_{\mu\nu}u^\mu u^\nu = -1 = -F\dot{T}^2+F^{-1}\dot{R}^2$. Thus, $n_R = \dot{T}$ and, consequently, $n_T = -\dot{R}$. Therefore, $n_\mu^+ = (-\dot{R}, \dot{T}, 0, 0)$. One can now show $K^+_{\theta\theta} = RF\dot{T}$ and $K^+_{\phi\phi} = K^+_{\theta\theta}\sin^2\theta$. Additional care must be taken with respect to the $\tau\tau$ component of $K^+_{ab}$, which can be written $K^+_{\tau\tau} = e^\mu_\tau e^\nu_\tau \nabla_\mu n_\nu$. The four-velocity $u^\mu$ of a particle sitting on $\Sigma$ can be defined such that $(e^\mu_\tau)^+ = u^\mu$. Using this definition, one can write $K^+_{\tau\tau} = u^\mu u^\nu \nabla_\mu n_\nu$. A standard identity then entails $K^+_{\tau\tau} = -n_\nu u^\mu \nabla_\mu u^\nu$. Since $u^\mu \nabla_\mu u^\nu$ is the four-acceleration $a^\nu$ of $\Sigma$, $K^+_{\tau\tau} = -n_\nu a^\nu$. Hence, $K_{ab}^+ = \text{diag}(-n_\nu a^\nu, RF\dot{T}, K_{\theta\theta}^+\sin^2\theta)$.

To match the extrinsic curvature on both sides, two conditions must be satisfied. First, $-n_\nu a^\nu = 0$ and, second, $a(\tau) S_k(\chi_0)C_k(\chi_0) = RF\dot{T}$. The second condition, in conjunction with $R = a(\tau)S_k(\chi_0)$, yields an equation of motion for $\Sigma$. The first condition implies that the four-acceleration of a particle on $\Sigma$ must be zero. This is physically intuitive. Since the interior is dust, there is no pressure from the interior on any of the particles on $\Sigma$. Likewise, the exterior is vacuum, so there is no exterior pressure. With no applied force, $\Sigma$ traverses a geodesic trajectory. Therefore, the four-acceleration is zero, i.e., $u^\mu\nabla_\mu u^\nu = 0$, and $[K_{\tau\tau}]=0$.

Lastly, an intuitive interpretation can be given for the parameter $M$. Combining $\dot{T} = F^{-1}C_k(\chi_0)$, $\dot{R}=\dot{a}S_k(\chi_0)$, $-1 = -f\dot{T}^2+F^{-1}\dot{R}^2$, and $C_k^2+kS_k^2=1$  yields $F = 1-(k+\dot{a}^2)kS_k(\chi_0)^2$. Using $F = 1-2M/R$ and $R = a(\tau) S_k(\chi_0)$, one finds $M = \frac{1}{2}a(\dot{a}^2+k)S_k(\chi_0)^3$. Substituting one of the FLRW equations, $M = \frac{4\pi}{3} a(\tau)^3 S_k(\chi_0)^3 \rho$. Thus, $M$ is the mass of the material in an FLRW region with density $\rho$; this is the \emph{Misner-Sharp mass}.

\section{Schwarzschild \& Minkowski}\label{SM_appendix}

A Minkowski interior can be matched to a Schwarzschild exterior through a thin shell. The shell either expands or collapses; if the shell collapses, a black hole forms. The Minkowski interior admits coordinates with the following line element:

\begin{equation}
    (ds^2)^- = -dt^2 + dr^2 + r^2d\Omega^2
\end{equation}

\noindent where $d\Omega^2 \equiv d\theta^2 + \sin^2\theta d\phi^2$, and a Schwarzschild exterior, with line element:

\begin{equation}
    (ds^2)^+ = -FdT^2 + F^{-1}dR^2 + R^2d\Omega^2
\end{equation}

\noindent The two regions are matched through a timelike surface $\Sigma$. On constant-time surfaces, $\Sigma$ is a sphere.

The first IJC states that the induced metric $h_{ab}$ is continuous between both sides of $\Sigma$. To calculate the induced metric in terms of interior coordinates, we can define an embedding, i.e., $x^-_\alpha(t, \theta, \phi) = ( t = t(\tau), r = r(\tau), \theta = \theta, \phi = \phi)$, with a corresponding Jacobian $e^\alpha_a = \partial x^\alpha/\partial y^a$. Using $e^\alpha_a$, one can compute $h^-_{ab} = \text{diag}(-\dot{t}^2 +\dot{r}^2, r^2, r^2 \sin^2\theta)$. For the exterior, we have the embedding $x^+_\alpha(t, \theta, \phi) = ( T = T(\tau), R = R(\tau), \theta = \theta, \phi = \phi)$, yielding $h^+_{ab} = \text{diag}(-F\dot{T}^2 + F^{-1}\dot{R}^2, R^2, R^2\sin^2\theta)$. Matching $h^+_{ab}$ with $h^-_{ab}$ entails two kinematic conditions. First, $-\dot{t}^2+\dot{r}^2 = -F\dot{T}^2+F^{-1}\dot{R}^2$ and, second, $R=r$. Note the latter is similar to the  Oppenheimer-Snyder model.

Now turn to the second IJC. The extrinsic curvature is discontinuous between the interior and exterior regions, requiring a thin shell is needed to compensate. The extrinsic curvature computed for the Schwarzschild exterior was found in \ref{OS_appendix}. For the Minkowski interior, spherical symmetry entails $n_\mu$ has, at most, $t$ and $r$ components, $n_\mu = (n_t, n_r, 0, 0)$. The orthogonality condition $n_\mu e^\mu_\tau = 0$ entails $n_t = -\dot{r}n_r/\dot{t}$. Using this relationship and that $n_\mu n^\mu = 1$, one obtains $n_r^2 \left( \dot{t}^2 - \dot{r}^2 \right) = \dot{t}^2$. Since $\dot{t}^2-\dot{r}^2 = 1$, $n_r = \dot{t}$ and $n_t=-\dot{r}$. Hence, $n_\mu = (-\dot{r}, \dot{t}, 0, 0)$. Now, $n_\mu$ can be used to compute $K^-_{ab}$. For the angular components, $K^-_{\theta\theta} = r\dot{t}$ and $K^-_{\phi\phi} = K^-_{\theta\theta}\sin^2\theta$. Plugging $u^\mu = e^\mu_\tau$ into $K^-_{\tau\tau}$ yields $K^-_{\tau\tau} = u^\mu u^\nu \nabla_\mu n_\nu$. The covariant derivative can be moved to $u^\nu$, i.e., $K^-_{\tau\tau} = -n_\nu u^\mu \nabla_\mu u^\nu$. Because $u^\mu \nabla_\mu u^\nu$ is the four-acceleration $a^\nu$ of $\Sigma$, $K^-_{\tau\tau} = -n_\nu a^\nu$. 

If there were no thin shell, $K^+_{ab}$ would be equal to $K^-_{ab}$. From the first junction condition, we know that $r=R$. A normalization condition ($n_\mu n^\mu = 1$) entails $\dot{T} = \left(\sqrt{\dot{R}^2+F}\right)/F$ and $\dot{t} = \sqrt{\dot{r}^2+1}$. Putting these conditions together with $K^\pm_{\theta\theta}$ and enforcing $[K_{\theta\theta}]=0$ entails $1-2M/R = 1$. Hence, $M=0$. This makes intuitive sense: with a zero mass parameter, the interior region reduces to zero size, and Schwarzschild reduces to Minkowski. 

Assuming the Minkowski interior has a non-zero radius, there must be a thin shell. The shell's stress-energy $S_{ab} = -(8\pi)^{-1}([K_{ab}]-[K]h_{ab})$. $K_{ab}$ and $K$ are discontinuous, i.e., 

\begin{equation}
    [K_{ab}] = \text{diag}(-(n_\nu a^\nu)^+ + (n_\nu a^\nu)^-, R\left(\sqrt{\dot{R}^2+F} - \sqrt{\dot{R}^2+1}\right), [K_{\theta\theta}]\sin^2\theta)
\end{equation}

\begin{equation}
    [K] = -[K_{\tau\tau}] + 2R^{-2}[K_{\theta\theta}]
\end{equation}

\noindent In turn,

\begin{equation}
    S_{ab} = \text{diag}\left(-\frac{[K_{\theta\theta}]}{4\pi R}, -\frac{1}{8\pi}(-[K_{\theta\theta}] + R^2[K_{\tau\tau}]), S_{\theta\theta}\sin^2\theta\right)
\end{equation}

\noindent For the matter-energy populating spacetime to be dust, $S_{ab}$ must equal $\text{diag}(\sigma, 0, 0)$, implying two requirements:

\begin{comment}
    \begin{equation}
    \begin{aligned}
        \sigma &= -\frac{1}{4\pi R}[K_{\theta\theta}] = \frac{1}{4\pi}\left(\sqrt{\dot{R}^2+1}-\sqrt{\dot{R}^2+F}\right)\\
        0 &= -[K_{\theta\theta}]+R^2 [K_{\tau\tau}] = 4\pi R \sigma + R^2[K_{\tau\tau}]
    \end{aligned}
\end{equation}
\end{comment}

\begin{equation}
    \begin{aligned}
        \sigma &= \frac{1}{4\pi}\left(\sqrt{\dot{R}^2+1}-\sqrt{\dot{R}^2+F}\right)\\
        0 &= 4\pi R \sigma + R^2[K_{\tau\tau}]
    \end{aligned}
\end{equation}

\noindent The latter entails $[K_{\tau\tau}] = -(4\pi/R)\sigma$. This condition is satisfied. For dust, the standard energy conditions are satisfied if $\sigma > 0$. In turn, $\sigma > 0$ as long as $M$, the mass parameter from the Schwarzschild exterior, is positive. This makes intuitive sense: a hollow shell of mass has the same external gravitational field as a ``point mass'' (or singularity) at the center. This is a relativistic version of Newton's shell theorem.

A spherical shell collapses to zero size in finite time. To see this, start by defining $\beta^+ \equiv \sqrt{\dot{R}^2+F}$ and $\beta^- \equiv \sqrt{\dot{R}^2+1}$, so that $S_{\theta\theta} = \frac{1}{8\pi}\left(\beta^+-\beta^-+\frac{R}{\dot{R}}\left(\dot{\beta}^+-\dot{\beta}\right)\right)$. Since dust stress-energy was assumed, $\dot{R}(\beta^+-\beta^-) + R(\dot{\beta}^+-\dot{\beta}^-) = 0$. This is equivalent to the $\tau$-derivative of $R(\beta^+-\beta^-)$ and, hence, $R(\beta^--\beta^+) = C$, where $C$ is a constant of integration. An equation of motion follows for the shell:

\begin{equation}
    \dot{R}^2 - \frac{M^2}{C^2} - \frac{C^2}{4R^2} + \frac{M}{R} + 1 = 0
\end{equation}

\noindent At small $R$, $\dot{R}^2 \approx C^2/(4R^2)$. Since $C^2/(4R^2)$ is strictly greater than zero, $\dot{R}^2$ is strictly greater than zero. Thus, for small $R$, $\dot{R}$ is never zero and $R$ must be monotonically increasing or decreasing. If the shell is collapsing, the shell will not stop collapsing until the shell reaches zero radius. Assuming a collapsing shell, separation of variables gives $R_0^2/C \approx \Delta \tau$, where $\Delta \tau \equiv \tau_f - \tau_0$. Since $\Delta \tau$ is finite, the shell collapses to zero radius in finite proper time.

\bibliographystyle{apa}
\bibliography{references.bib}

\end{document}

%% file: Clothesline.tex
%\documentclass[tikz,border=4pt]{standalone}
%\usepackage{tikz}
%\usetikzlibrary{arrows.meta}
%\usetikzlibrary{calc}

%\begin{document}
\begin{tikzpicture}[line cap=round, line join=round]

% ---------- panel geometry ----------
\def\W{6.2}
\def\H{3.9}
\def\Gx{0.7}
\def\Gy{0.7}

% ---------- helpers ----------
\newcommand{\openseg}[4]{%
  \draw[line width=0.9pt] (#1) -- (#2);
  \draw[line width=0.9pt, fill=white] (#1) circle (0.08);
  \draw[line width=0.9pt, fill=white] (#2) circle (0.08);
  \node[#4] at ($(#1)!0.5!(#2)$) {$#3$};
}

\newcommand{\lightcone}[3]{%
  \begin{scope}
    \clip (#2) rectangle (#3);
    \draw[dash pattern=on 2.2pt off 2.2pt, line width=0.9pt]
      (#1) -- ++(225:20);
    \draw[dash pattern=on 2.2pt off 2.2pt, line width=0.9pt]
      (#1) -- ++(315:20);
  \end{scope}
}

\tikzset{
  myarrow/.style={
    {Stealth[length=4mm,width=4mm]}-{Stealth[length=4mm,width=4mm]},
    line width=1.6pt
  }
}

% =========================
% Box 1 (top-left)
% =========================
\coordinate (A0) at (0,0);
\coordinate (A1) at ($(A0)+(\W,\H)$);
\draw[line width=0.9pt] (A0) rectangle (A1);

\coordinate (A_L) at ($(A0)+(1.0,2.35)$);
\coordinate (A_R) at ($(A0)+(2.8,2.35)$);
\openseg{A_L}{A_R}{K_1^{+}}{below=3pt}

\coordinate (p1) at ($(A0)+(4.7,3.1)$);
\fill (p1) circle (0.08);
%\node at ($(p1)+(0.5,0.12)$) {$p_1$};

\node at ($(p1)+(0,-1.5)$) {$C_1$};

\lightcone{p1}{A0}{A1}

% =========================
% Box 2 (top-right)
% =========================
\coordinate (B0) at ($(A0)+(\W+\Gx,0)$);
\coordinate (B1) at ($(B0)+(\W,\H)$);
\draw[line width=0.9pt] (B0) rectangle (B1);

\coordinate (B_topL) at ($(B0)+(3.0,2.75)$);
\coordinate (B_topR) at ($(B0)+(4.8,2.75)$);
\openseg{B_topL}{B_topR}{K_2^{+}}{above=6pt, xshift=-10pt}

\coordinate (B_botL) at ($(B0)+(3.0,2.05)$);
\coordinate (B_botR) at ($(B0)+(4.8,2.05)$);
\openseg{B_botL}{B_botR}{K_2^{-}}{above=0pt}

\coordinate (p2) at ($(B0)+(1.9,2.55)$);
\fill (p2) circle (0.08);
%\node at ($(p2)+(-0.5,0.25)$) {$p_2$};

\node at ($(p2)+(0,-1.3)$) {$C_2$};

\lightcone{p2}{B0}{B1}

% =========================
% Box 3 (bottom-left)
% =========================
\coordinate (C0) at ($(A0)+(0,-\H-\Gy)$);
\coordinate (C1) at ($(C0)+(\W,\H)$);
\draw[line width=0.9pt] (C0) rectangle (C1);

\coordinate (C_L1) at ($(C0)+(1.2,1.95)$);
\coordinate (C_R1) at ($(C0)+(3.2,1.95)$);
\openseg{C_L1}{C_R1}{K_1^{+}}{above=3pt}

\coordinate (C_L2) at ($(C0)+(4.0,1.95)$);
\coordinate (C_R2) at ($(C0)+(5.8,1.95)$);
\openseg{C_L2}{C_R2}{K_2^{-}}{below=3pt}

% =========================
% Box 4 (bottom-right)
% =========================
\coordinate (D0) at ($(B0)+(0,-\H-\Gy)$);
\coordinate (D1) at ($(D0)+(\W,\H)$);
\draw[line width=0.9pt] (D0) rectangle (D1);

\coordinate (D_topL) at ($(D0)+(3.2,2.55)$);
\coordinate (D_topR) at ($(D0)+(5.0,2.55)$);
\openseg{D_topL}{D_topR}{K_2^{+}}{above=6pt}

\coordinate (D_botL) at ($(D0)+(3.2,1.10)$);
\coordinate (D_botR) at ($(D0)+(5.0,1.10)$);
\openseg{D_botL}{D_botR}{K_3^{-}}{below=3pt}

% =====================================================
% Thick curve: K1+ (top-left) to K1+ (bottom-left)
% =====================================================
% Thick curve: enter TOP of K1+ in top-left, exit BOTTOM of K1+ in bottom-left
\coordinate (A_S1_mid) at ($(A_L)!0.5!(A_R)$);
\coordinate (C_S1_mid) at ($(C_L1)!0.5!(C_R1)$);

% Thick curve: K1+ top-left to K1+ bottom-left (go AROUND the segment)

\coordinate (A_S1_mid) at ($(A_L)!0.5!(A_R)$);
\coordinate (C_S1_mid) at ($(C_L1)!0.5!(C_R1)$);

% --- revised K1+ curve: hook around the segments (no arrowhead yet)

% start near RIGHT side of top K1+ and slightly above it
%\coordinate (S1topStart) at ($(A_R)+(-0.15,0.18)$);
\coordinate (S1topStart) at ($(A_R)+(-1,0)$);

% end near LEFT side of bottom K1+ and slightly below it
\coordinate (S1botEnd)   at ($(C_L1)+(1,0)$);

\draw[myarrow]
  (S1topStart)
  .. controls
     % pull left immediately in the top-left box (avoid crossing the segment)
     %($(A0)+(0.75,2.80)$)
     ($(A0)+(0,6)$)
     and
     % stay well left while descending into the bottom-left box
     %($(C0)+(0.55,1.10)$)
     ($(C0)+(-0.55,-2)$)
  .. (S1botEnd);
% =====================================================
% Thick curve K2- (top-right) to K2- (bottom-left), perpendicular at both ends

\coordinate (S2m_top) at ($(B_botL)!0.5!(B_botR)$);
\coordinate (S2m_bot) at ($(C_L2)!0.5!(C_R2)$);

% choose how "perpendicular" / how long the vertical departure/arrival is:
\def\perpA{1.1}  % vertical lift from the top segment
\def\perpB{1.1}  % vertical lift above the bottom segment

\draw[myarrow]
  (S2m_top)
  .. controls
     ($(S2m_top)+(0,-\perpA)$)   % straight down from top segment
     and
     ($(S2m_bot)+(0,\perpB)$)    % straight up into bottom segment
  .. (S2m_bot);
  % =====================================================
  
% =====================================================
% Thick curve: connect K2+ (top-right) -> K2+ (bottom-right),
% hooking around the segments (no arrowhead yet)
% Requires coordinates: B0,B1,D0,D1,B_topL,B_topR,D_topL,D_topR
% =====================================================

% Start near RIGHT end of top K2+ and slightly ABOVE it
\coordinate (S2topStart) at ($(B_topR)+(-1,0)$);

% End near LEFT end of bottom K2+ and slightly BELOW it
\coordinate (S2botEnd)   at ($(D_topL)+(1,0)$);

\draw[myarrow]
  (S2topStart)
  .. controls
     % push outward to the RIGHT early (avoid cutting through the top segment)
     ($(B1)+(0.85,3)$)
     and
     % stay right while descending, then turn back toward the endpoint
     ($(D1)+(0.85,-5)$)
  .. (S2botEnd);

%==============
%Ellipses

\node[rotate=90] (vell) at ($(D1)+(1,0)$) {\Huge\textbf{$\cdots$}};
\node at ($(D1)+(1,2.5)$) {\Huge$\cdots$};
\node at ($(D0)+(\W+1,0.5)$) {\Huge$\cdots$};

% ---- curve from bottom of vertical ellipsis to top of S3- segment
\coordinate (vellBottom) at ($(vell.west)+(0,-0.05)$);
\coordinate (S3topIn) at ($($(D_botL)!0.5!(D_botR)$)+(0,0)$); % a hair above S3-

\draw[myarrow]
  (vellBottom)
  .. controls
     ($(vellBottom)+(0,-2.4)$)
     and
     ($(S3topIn)+(0,1)$)
  .. (S3topIn);

%=============
%Box Labels

\node[anchor=north west] at ($(A0)+(0,\H)$) {$M(1,\alpha)$};

\node[anchor=north west] at ($(B0)+(0,\H)$) {$M(2,\alpha)$};

\node[anchor=north west] at ($(C0)+(0,0.6)$) {$M(1,\beta)$};

\node[anchor=north west] at ($(D0)+(0,0.6)$) {$M(2,\beta)$};

\pgfresetboundingbox
\path[use as bounding box] (C0) rectangle ($(B1)+(1.5,0.2)$);

%\draw[red] (current bounding box.south west) rectangle (current bounding box.north east);

\end{tikzpicture}
%\end{document}

%% file: Theorem_3_Diagram.tex
\begin{tikzpicture}[line cap=round,line join=round]

% ---------- Styling ----------
\tikzset{
  box/.style={draw, very thick, rounded corners=2pt},
  innerbar/.style={draw, very thick},
  thinbar/.style={draw, thick},
  arrow/.style={-Latex, very thick},
  smallcirc/.style={draw, thick},
}

% ---------- Left region: M(1,beta) ----------
% Outer rectangle
\draw[box] (0,0) rectangle (9,5);

% Label
\node[anchor=west] at (0,4.65) {$M(1,\beta)$};

% Three circular components with short "interval" inside
% Bottom-left circle
\draw[very thick] (1.7,1.3) circle (1.0);
\draw[thick] (1.2,1.4) -- (2.2,1.4);
\draw[smallcirc, fill=white] (1.2,1.4) circle (0.08);
\draw[smallcirc, fill=white] (2.2,1.4) circle (0.08);
\node at (1.7,1.75) {$K_2^{-}$};

% Upper circle
\draw[very thick] (2.6,3.3) circle (1.0);
\draw[thick] (2.1,3.3) -- (3.1,3.3);
\draw[smallcirc, fill=white] (2.1,3.3) circle (0.08);
\draw[smallcirc, fill=white] (3.1,3.3) circle (0.08);
\node at (2.6,3.65) {$K_1^{+}$};

% Region O
\draw[very thick] (6.2,2.5) ellipse (1 and 0.9);
\draw[very thick] (6.3,2.3) circle (1.5);

% N in the vertical strip
\draw[thick] (5.75,2.5) -- (6.75,2.5);
\draw[smallcirc, fill=white] (5.75,2.5) circle (0.08);
\draw[smallcirc, fill=white] (6.75,2.5) circle (0.08);
\node at (6.25,2.8) {$S'$};
\node at (6.55,2) {$R$};
\node at (7.35,1.55) {$O$};

% ---------- Connector arrow (down) between the two boxes ----------
\draw[arrow] (10,1.5) -- (10,1);
\draw[arrow] (10,2.5) -- (10,2);
\draw[arrow] (10,3.5) -- (10,3);

% ---------- Right region: R ----------
\draw[box] (10.0,0) rectangle (14.5,5);
\node[anchor=west] at (10,4.65) {M{\"o}bius};

%(5.75,2.5) -- (6.75,2.5)
\draw[thick] (11,2.0) -- (12,2.0);
\draw[smallcirc, fill=white] (11,2.0) circle (0.08);
\draw[smallcirc, fill=white] (12,2.0) circle (0.08);
\node at (11.6,2.3) {$S$};

% Up arrow on right boundary of R
\draw[arrow] (14.5,1) -- (14.5,1.5);
\draw[arrow] (14.5,2) -- (14.5,2.5);
\draw[arrow] (14.5,3) -- (14.5,3.5);

\end{tikzpicture}

%% file: Theorem_4_Diagram.tex
\begin{tikzpicture}[line cap=round,line join=round]

% ---------- Styling ----------
\tikzset{
  box/.style={draw, very thick, rounded corners=2pt},
  innerbar/.style={draw, very thick},
  thinbar/.style={draw, thick},
  arrow/.style={-Latex, very thick},
  smallcirc/.style={draw, thick},
}

% ---------- Left region: M(1,beta) ----------
% Outer rectangle
\draw[box] (0,0) rectangle (9,5);

% Label
\node[anchor=west] at (0,4.65) {$M(1,\beta)$};

% Three circular components with short "interval" inside
% Bottom-left circle
\draw[very thick] (1.7,1.3) circle (1.0);
\draw[thick] (1.2,1.4) -- (2.2,1.4);
\draw[smallcirc, fill=white] (1.2,1.4) circle (0.08);
\draw[smallcirc, fill=white] (2.2,1.4) circle (0.08);
\node at (1.7,1.75) {$K_2^{-}$};

% Upper circle
\draw[very thick] (2.6,3.3) circle (1.0);
\draw[thick] (2.1,3.3) -- (3.1,3.3);
\draw[smallcirc, fill=white] (2.1,3.3) circle (0.08);
\draw[smallcirc, fill=white] (3.1,3.3) circle (0.08);
\node at (2.6,3.65) {$K_1^{+}$};

% Region O
\draw[very thick] (6.2,2.5) ellipse (1 and 0.9);
\draw[very thick] (6.3,2.3) circle (1.5);

% N in the vertical strip
\draw[thick] (5.75,2.5) -- (6.75,2.5);
\draw[smallcirc, fill=white] (5.75,2.5) circle (0.08);
\draw[smallcirc, fill=white] (6.75,2.5) circle (0.08);
\node at (6.25,2.8) {$S'$};
\node at (6.55,2) {$R$};
\node at (7.35,1.55) {$O$};

% ---------- Right region: R ----------
\draw[box] (10.0,0) rectangle (14.5,5);
\node[anchor=west] at (10,4.65) {Minkowski};

%(5.75,2.5) -- (6.75,2.5)
\draw[thick] (11,2.0) -- (12,2.0);
\draw[smallcirc, fill=white] (11,2.0) circle (0.08);
\draw[smallcirc, fill=white] (12,2.0) circle (0.08);
\node at (11.6,2.3) {$S$};

\end{tikzpicture}

%% file: OSModelDiagram.tex
\begin{tikzpicture}[x=1cm,y=1cm, line cap=round, line join=round]

    \tikzset{
  arrow/.style={-Latex, very thick}
}

\def\BoxTop{5}
\def\BoxWidth{8}
\def\BoxCornerX{0}
\def\BoxCornerY{0}
\def\TopApexY{0.5*\BoxTop}
\def\TopApexX{4}
\def\BotApexY{0}
\def\BotApexX{4}
\def\labeloffsettop{0.4}

% --------- Making first box --------

\begin{scope}[xshift=0cm]

  % --- bounding box (shorter) ---
  \draw[very thick] (\BoxCornerX,\BoxCornerY) rectangle (\BoxWidth,\BoxTop);

  % --- wedges ---
  \coordinate (TopApex) at (\TopApexX,\TopApexY);
  \coordinate (BotApex) at (\BotApexX,\BotApexY);

  % top wedge
  \draw[very thick] (0.9,\BoxTop) -- (TopApex) -- (7.1,\BoxTop);

  % --- singularity (dashed) ---
  \draw[very thick, dashed] (TopApex) -- (BotApex);
  \node[rotate=90] at (\TopApexX + \labeloffsettop,0.5*\TopApexY) {Singularity};

  % --- labels ---
  \node at (\TopApexX,\TopApexY+1) {FLRW};
  \node[anchor=west] at (0.2,\labeloffsettop) {Schwarzschild};

  % --- slits in the FLRW region ---
  \draw[very thick] (\TopApexX-1.8,\BoxTop-0.6) -- (3.4,\BoxTop-0.6); %K_1^+ slit
  \draw[very thick] (\TopApexX+0.6,\BoxTop-0.6) -- (5.8,\BoxTop-0.6);  %K_2^- slit

  \draw[very thick, fill=white] (\TopApexX-1.8,\BoxTop-0.6) circle (0.08); % Open left end point for K_1^+
  \draw[very thick, fill=white] (3.4,\BoxTop-0.6) circle (0.08); % Open right end point for K_1^+
  \draw[very thick, fill=white] (\TopApexX+0.6,\BoxTop-0.6) circle (0.08); % Open left end point for K_2^-
  \draw[very thick, fill=white] (5.8,\BoxTop-0.6) circle (0.08); % Open right end point for K_2^-

  \node at (2.8,\BoxTop - 0.6 -\labeloffsettop) {$K_1^+$};
  \node at (5.2,\BoxTop - 0.6 -\labeloffsettop) {$K_2^-$};

\end{scope}

\end{tikzpicture}

%% file: MinkowskiSchwarzschild.tex
\begin{tikzpicture}[x=1cm,y=1cm, line cap=round, line join=round]

    \tikzset{
  arrow/.style={-Latex, very thick}
}

\def\BoxTop{5}
\def\BoxWidth{8}
\def\BoxCornerX{0}
\def\BoxCornerY{0}
\def\TopApexY{0.5*\BoxTop}
\def\TopApexX{4}
\def\BotApexY{0}
\def\BotApexX{4}
\def\labeloffsettop{0.4}

% --------- Making first box --------

\begin{scope}[xshift=-5cm]

  % --- bounding box (shorter) ---
  \draw[very thick] (\BoxCornerX,\BoxCornerY) rectangle (\BoxWidth,\BoxTop);

  % --- wedges ---
  \coordinate (TopApex) at (\TopApexX,\TopApexY);
  \coordinate (BotApex) at (\BotApexX,\BotApexY);

  % top wedge
  \draw[very thick] (0.9,\BoxTop) -- (TopApex) -- (7.1,\BoxTop);

  % --- singularity (dashed) ---
  \draw[very thick, dashed] (TopApex) -- (BotApex);
  \node[rotate=90] at (\TopApexX + \labeloffsettop,0.5*\TopApexY) {Singularity};

  % --- labels ---
  \node at (\TopApexX,\TopApexY+1) {FLRW};
  \node[anchor=west] at (0.2,\labeloffsettop) {Schwarzschild};

  % --- slits in the FLRW region ---
  \draw[very thick] (\TopApexX-1.8,\BoxTop-0.6) -- (3.4,\BoxTop-0.6); %K_1^+ slit
  \draw[very thick] (\TopApexX+0.6,\BoxTop-0.6) -- (5.8,\BoxTop-0.6);  %K_2^- slit

  \draw[very thick, fill=white] (\TopApexX-1.8,\BoxTop-0.6) circle (0.08); % Open left end point for K_1^+
  \draw[very thick, fill=white] (3.4,\BoxTop-0.6) circle (0.08); % Open right end point for K_1^+
  \draw[very thick, fill=white] (\TopApexX+0.6,\BoxTop-0.6) circle (0.08); % Open left end point for K_2^-
  \draw[very thick, fill=white] (5.8,\BoxTop-0.6) circle (0.08); % Open right end point for K_2^-

  \node at (2.8,\BoxTop - 0.6 -\labeloffsettop) {$K_1^+$};
  \node at (5.2,\BoxTop - 0.6 -\labeloffsettop) {$K_2^-$};

  \draw[very thick] (5.5,0.65*\TopApexY) -- (6.5,0.65*\TopApexY); % line segment connecting Schwarzschild to Schwarzschild
  \draw[very thick, fill=white] (5.5,0.65*\TopApexY) circle (0.08); % left open boundary for line segment connecting Schwarzschild to Schwarzschild
  \draw[very thick, fill=white] (6.5,0.65*\TopApexY) circle (0.08); % right open boundary for line segment connecting Schwarzschild to Schwarzschild
  \node at (6,0.65*\TopApexY+\labeloffsettop) {$S_1$}; % labels line segment connecting Schwarzschild to Schwarzschild

\end{scope}

%------- Second M(1, beta) region ------

\begin{scope}[xshift=3.5cm]

  % --- bounding box (shorter) ---
  \draw[very thick] (\BoxCornerX,\BoxCornerY) rectangle (\BoxWidth,\BoxTop);

  % --- wedges (FLRW widened + made taller) ---
  \coordinate (TopApex2) at (\TopApexX,\BoxTop);
  \coordinate (BotApex2) at (\TopApexX,2.0);

  % bottom wedge
  \draw[very thick] (0.9,0) -- (BotApex2) -- (7.1,0);

  % --- singularity (dashed) ---
  \draw[very thick, dashed] (TopApex2) -- (BotApex2);
  \node[rotate=90] at (\TopApexX + \labeloffsettop,1.3*\TopApexY) {Singularity};

  % --- labels ---
  \node[anchor=west] at (0.2,\BoxTop - \labeloffsettop) {Schwarzschild};
  \node at (4,0.5) {Minkowski};

  % Draw and label spacelike surface connecting Minkowski region to Mobius region
  \draw[very thick] (3.5,1.0) -- (4.5,1.0); % line segment connecting Minkowski region to Mobius region
  \draw[very thick, fill=white] (3.5,1.0) circle (0.08); % left open boundary for line segment connecting Minkowski region to Mobius region
  \draw[very thick, fill=white] (4.5,1.0) circle (0.08); % right open boundary for line segment connecting Minkowski region to Mobius region
  \node at (4,1.3) {$S_2$}; % labels line segment connecting Minkowski and Mobius regions

  \draw[very thick] (5.5,0.85*\TopApexY) -- (6.5,0.85*\TopApexY); % line segment connecting Schwarzschild to Schwarzschild
  \draw[very thick, fill=white] (5.5,0.85*\TopApexY) circle (0.08); % left open boundary for line segment connecting Schwarzschild to Schwarzschild
  \draw[very thick, fill=white] (6.5,0.85*\TopApexY) circle (0.08); % right open boundary for line segment connecting Schwarzschild to Schwarzschild
  \node at (6,0.85*\TopApexY+\labeloffsettop) {$S_1$}; % labels line segment connecting Schwarzschild to Schwarzschild

\end{scope}

    % ---------- Right region: R ----------

    \begin{scope}[xshift = -9cm, yshift = -5.5cm]

    \draw[very thick] (10.0,0) rectangle (14.5,5);
    \node[anchor=west] at (10,4.65) {M{\"o}bius};
    
    %(5.75,2.5) -- (6.75,2.5)
    \draw[very thick] (11,2.0) -- (12,2.0);
    \draw[very thick, fill=white] (11,2.0) circle (0.08);
    \draw[very thick, fill=white] (12,2.0) circle (0.08);
    \node at (11.5,2+\labeloffsettop) {$S_2$};
    
    % Up arrow on right boundary of R
    \draw[arrow] (14.5,1) -- (14.5,1.5);
    \draw[arrow] (14.5,2) -- (14.5,2.5);
    \draw[arrow] (14.5,3) -- (14.5,3.5);

    % ---------- Connector arrow (down) between the two boxes ----------
    \draw[arrow] (10,1.5) -- (10,1);
    \draw[arrow] (10,2.5) -- (10,2);
    \draw[arrow] (10,3.5) -- (10,3);

    \end{scope}
  
\end{tikzpicture}

%% file: references.bib
@ARTICLE{Israel:1966,
       author = {Werner Israel},
        title = "{Singular hypersurfaces and thin shells in general relativity}",
      journal = {Nuovo Cimento B},
      volume = {66},
         year = "1966",
         number = {1},
         pages = {1-14}
}

@book{HawkingEllis:1973,
    author    = {Stephen Hawking and George Ellis},
    title     = "{The Large-Scale Structure of Space-Time}",
    year      = "1973",
    publisher = "Cambridge University Press"
}

@book{EllisMaartensMacCallum:2012,
    author    = {George Ellis and Roy Maartens and Malcolm MacCallum},
    title     = "{Relativistic Cosmology}",
    year      = "2012",
    publisher = "Cambridge University Press"
}

@book{Poisson:2004,
    author    = {Erik Poisson},
    title     = "A Relativist's Toolkit: The Mathematics of Black Hole Mechanics",
    year      = "2004",
    publisher = "Cambridge University Press"
}

@article{Senovilla:1996, 
	title="{Towards realistic singularity-free cosmological models}", 
	volume = {53},
    issue = {4},
	journal = {Physical Review D}, 
	author={J.M.M. Senovilla}, 
	year={1996}, 
	pages={1799-1807}
}

@article{Senovilla:1990, 
	title="{New Class of Inhomogeneous Cosmological Perfect-Fluid Solutions without Big-Bang Singularity}", 
	volume = {64},
    issue = {19},
	journal = {Physical Review Letters}, 
	author={J.M.M. Senovilla}, 
	year={1990}, 
	pages={2219-2221}
}

@incollection{Senovilla:1993,
  author = {J.M.M. Senovilla},
  title = {Singularity-free spacetimes}, 
  editor = {F. J. Chinea and L. M. González-Romero}, 
  booktitle = "{Rotating Objects and Relativistic Physics}", 
  publisher = {Springer},
  year = "1993",
  pages = {185-193}
}

@incollection{Earman:1999,
  author = {John Earman},
  title = "{The Hawking–Penrose singularity theorems: History and implications}", 
  editor = {H. Goenner and J. Renn and J. Ritter and T. Sauer}, 
  booktitle = "{The Expanding Worlds of General Relativity}", 
  publisher = {Birkhäuser},
  year = "1999",
  pages = {235–267}
}

@article{ClarkeDray:1987,
    title = "{Junction conditions for null hypersurfaces}",
    journal = {Classical and Quantum Gravity},
    volume = {4},
    pages = {265-275},
    year = {1987},
    author = {C.J.S. Clarke and Tevian Dray}
}

@book{MisnerThorneWheeler:1973,
    author    = "Charles Misner and Kip Thorne and John Wheeler",
    title     = "Gravitation",
    year      = "1973",
    publisher = "W.H. Freeman and Company"
}

@article{Hawking:1969,
     author = {Stephen Hawking},
     journal = {The Existence of Cosmic Time Functions},
     volume = {308},
     number = {1494},
     pages = {433-435},
     title = "{The Existence of Cosmic Time Functions}",
     year = {1969}
}

@article{Hawking:1965,
     author = {Stephen Hawking},
     journal = {Physical Review Letters},
     volume = {15},
     number = {17},
     pages = {689-690},
     title = "{Occurrence of Singularities in Open Universes}",
     year = {1965}
}

@article{Hawking:1966,
     author = {Stephen Hawking},
     journal = {Proceedings of the Royal Society of London, Series A},
     volume = {294},
     number = {1439},
     pages = {511-521},
     title = "{The Occurrence of Singularities in Cosmology}",
     year = {1966}
}

@article{Hawking_Penrose:1970,
     author = {Stephen Hawking and Roger Penrose},
     journal = {Proceedings of the Royal Society of London, Series A},
     volume = {314},
     pages = {529-548},
     title = "{The singularities of gravitational collapse and cosmology}",
     year = {1970}
}

@article{Geroch:1966,
     author = {Robert Geroch},
     journal = {Physical Review Letters},
     volume = {17},
     number = {8},
     pages = {445-447},
     title = "{Singularities in Closed Universes}",
     year = {1966}
}

@article{Senovilla:2007,
     author = {J.M.M. Senovilla},
     journal = {Pramana},
     volume = {69},
     pages = {31-47},
     title = "{A singularity theorem based on spatial averages}",
     year = {2007}
}

@article{DrobovTegai:2013,
  title = {Dust thin shell limit of a thick wall},
  author = {Drobov, I. V. and Tegai, S. Ph.},
  journal = {Physical Review D},
  volume = {87},
  issue = {2},
  pages = {024025},
  numpages = {4},
  year = {2013}
}

@article{KhakshourniaMansouri:2002,
  title = {Dynamics of General Relativistic Spherically Symmetric Dust Thick Shells},
  author = {S. Khakshournia and R. Mansouri},
  journal = {General Relativity and Gravitation},
  volume = {34},
  issue = {11},
  pages = {1847-1853},
  year = {2002}
}

@article{KhosraviKhakshourniaMansouri:2006,
  title = {Evolution of thick shells in curved spacetimes},
  author = {Sh. Khosravi and S. Khakshournia and R. Mansouri},
  journal = {Classical and Quantum Gravity},
  volume = {23},
  issue = {20},
  pages = {5927-5939},
  year = {2006}
}

@book{Wald:1984,
    author    = {Robert Wald},
    title     = "{General Relativity}",
    year      = "1984",
    publisher = "The University of Chicago Press"
}

@book{Goodman:1955,
    author    = {Nelson Goodman},
    title     = "{Fact, Fiction, \& Forecast}",
    year      = "1995",
    publisher = "Harvard University Press"
}

@ARTICLE{Fayos:1991,
   author = {F. Fayos and X. Jaen and E. Llanta and J.M.M. Senovilla},
    title = "{Matching of the Vaidya and Robertson-Walker metric}",
  journal = {Classical and Quantum Gravity},
     year = 1991,
   volume = 8,
  pages = {2057--2068}
}

@ARTICLE{FewsterGalloway:2011,
   title = {Singularity theorems from weakened energy conditions},
    author = {Christopher J Fewster and Gregory J Galloway},
  journal = {Classical and Quantum Gravity},
     year = 2011,
   volume = 28,
    issue = 12,
  pages = {1-18}
}

@ARTICLE{FewsterKontou:2020,
   title = {A new derivation of singularity theorems with weakened energy hypotheses},
    author = {Fewster, Christopher and Kontou, Eleni-Alexandra},
  journal = {Classical and Quantum Gravity},
     year = 2020,
   volume = 37,
    issue = 6,
  pages = {1-31}
}

@ARTICLE{Tipler:1978,
   title = {Energy conditions and spacetime singularities},
    author = {Tipler, Frank J.},
  journal = {Physical Review D},
     year = 1978,
   volume = 17,
    issue = 10,
  pages = {2521--2528}
}

@ARTICLE{Brown_etal:2018,
   title = "{A singularity theorem for Einstein–Klein–Gordon theory}",
    author = {Brown, Peter and Fewster, Christopher and Kontou, Eleni-Alexandra},
  journal = {General Relativity and Gravitation},
     year = 2018,
   volume = 50,
    issue = 121,
  pages = {1-24}
}

@ARTICLE{Wall:2013,
   title = {The generalized second law implies a quantum singularity theorem},
    author = {Wall, Aaron},
  journal = {Classical and Quantum Gravity},
     year = 2013,
   volume = 30,
  pages = {1-35}
}

@ARTICLE{Senovilla:2021,
   title = {A critical appraisal of the singularity theorems},
    author = {J.M.M. Senovilla},
  journal = {Philosophical Transactions A},
     year = 2021,
   volume = 380,
  pages = {1-16}
}

@ARTICLE{Fewster:2022,
   author = {Christopher J Fewster and Eleni-Alexandra Kontou},
    title = "{A semiclassical singularity theorem}",
  journal = {Classical and Quantum Gravity},
     year = 2022,
   volume = 39,
  pages = {1-32}
}

@article{McCoy:2017,
  title = {Can Typicality Arguments Dissolve Cosmology’s Flatness Problem?},
  author = {C.D. McCoy},
  journal = {Philosophy of Science},
  volume = {84},
  issue = {5},
  pages = {1239-1252},
  year = {2017}
}

@article{Gibbons_etal:1987,
  title = {A Natural Measure on the Set of all Universes},
  author = {G. Gibbons and S. Hawking and J. Stewart},
  journal = {Nuclear Physics B},
  volume = {281},
  issue = {3-4},
  pages = {736-751},
  year = {1987}
}

@article{Coule:1995,
  title = {Canonical measure and the flatness of a FRW universe},
  author = {D. Coule},
  journal = {Classical and Quantum Gravity},
  volume = {12},
  issue = {2},
  pages = {455-469},
  year = {1995}
}

@article{HawkingPage:1988,
  title = {How Probable is Inflation?},
  author = {S. Hawking and D. Page},
  journal = {Nuclear Physics B},
  volume = {4},
  issue = {21},
  pages = {789-809},
  year = {1988}
}

@article{Wenmackers:2023,
  title = {Uniform probability in cosmology},
  author = {S. Wenmackers},
  journal = {Studies in History and Philosophy of Science},
  volume = {101},
  issue = {},
  pages = {48-60},
  year = {2023}
}

@misc{MithaniVilenkin:2012,
      title={Did the universe have a beginning?}, 
      author={Audrey Mithani and Alexander Vilenkin},
      year={2012},
      eprint={1204.4658},
      archivePrefix={arXiv},
      primaryClass={hep-th},
      url={https://arxiv.org/abs/1204.4658}, 
}

@ARTICLE{Manchak:2009,
   author = {Manchak, J.B. },
    title = "{Can we know the global structure of spacetime?}",
  journal = {Studies in History and Philosophy of Modern Physics},
     year = 2009,
   volume = 40,
  pages = {53--56}
}

@ARTICLE{Manchak:2011,
   author = {J.B. Manchak},
    title = "{What Is a Physically Reasonable Space-Time?}",
  journal = {Philosophy of Science},
     year = 2011,
   volume = 78,
   number = 3,
  pages = {410--420}
}

@article{Manchak:2016b,
     author = {J.B. Manchak},
     journal = {Philosophy of Science},
     volume = {83},
     pages = {1050-1058},
     title = "{On G{\"o}del and the Ideality of Time}",
     year = {2016}
}

@book{Manchak:2020, 
    title = "{Global Spacetime Structure}", 
    publisher={Cambridge University Press}, 
    author={J.B. Manchak}, 
    year={2020}
}

@book{Art:2023, 
    title = "{The Art of Gluing Space-Time Manifolds: Methods and Applications}", 
    publisher={Springer}, 
    author={Samad Khakshournia and Reza Mansouri}, 
    year={2023}
}

@book{Earman:1995, 
    title = "Bangs, Crunches, Whimpers, and Shrieks", 
    publisher={Oxford University Press}, 
    author={John Earman}, 
    year={1995},
    address = {Oxford}
}

@InCollection{sep-spacetime-singularities,
	author       =	{Curiel, Erik},
	title        =	"Singularities and Black Holes",
	booktitle    =	"The {Stanford} Encyclopedia of Philosophy",
	editor       =	{Edward N. Zalta},
	howpublished =	{\url{https://plato.stanford.edu/archives/spr2021/entries/spacetime-singularities/}},
	year         =	{2021},
	edition      =	{{S}pring 2021},
	publisher    =	{Metaphysics Research Lab, Stanford University},
    address = {n.p.}
}

@InCollection{Carroll:2023,
	author       =	{Sean Carroll},
	title        =	"{In What Sense is the Early Universe Fine-tuned?}",
	booktitle    =	"The Probability Map of the Universe: Essays on David Albert’s Time and Chance",
	editor       =	{Barry Loewer and Brad Weslake and Erik Winsberg},
	year         =	{2023},
	publisher    =	{Harvard University Press}
}

@InCollection{Malament:1977,
	author       =	{David Malament},
	title        =	"{Observationally Indistinguishable Spacetimes}",
	booktitle    =	"{Foundations of Space-Time Theories}",
    Editors = {John Earman and John J. Stachel},
    publisher = "University of Minnesota Press",
    year = "1977",
    pages = {61-80}
}
